\newcommand{\idlow}[1]{\mathord{\mathcode`\-="702D\it #1\mathcode`\-="2200}}
\newcommand{\id}[1]{\ensuremath{\idlow{#1}}}
\newcommand{\litlow}[1]{\mathord{\mathcode`\-="702D\sf #1\mathcode`\-="2200}}
\newcommand{\lit}[1]{\ensuremath{\litlow{#1}}}
\newtheorem{lemma}{Lemma}
\newtheorem{theorem}{Theorem}
\newtheorem{claim}{Claim}
\newtheorem{corollary}{Corollary}
\newtheorem{prop}{Proposition}
\newcommand{\Get}{\lit{Get}}
\newcommand{\Free}{\lit{Free}}
\newcommand{\Call}{\lit{Call}}
\newcommand{\Collect}{\lit{Collect}}
\newcommand{\LA}{\lit{LevelArray}}
\newenvironment{proofT}[1]{\proof\def\toto{#1}}{\hspace*{\fill}$\Box_{Theorem~\ref{\toto}}$\par\vspace{3mm}}
\newenvironment{proofL}[1]{\proof\def\toto{#1}}{\hspace*{\fill}$\Box_{Lemma~\ref{\toto}}$\par\vspace{3mm}}
\newenvironment{proofC}{\proof}{\hspace*{\fill}$\Box_{Corollary~\ref{\toto}}$\par\vspace{3mm}}
\newenvironment{proofP}[1]{\proof\def\toto{#1}}{\hspace*{\fill}$\Box_{Proposition~\ref{\toto}}$\par\vspace{3mm}}
\newcommand{\thmpostponed}[2]
{
\newcounter{#1}
\setcounter{#1}{\value{theorem}}
\begin{theorem}
\label{thm:#1}
#2
\end{theorem}
\expandafter\def\csname #1\endcsname{
\newcounter{#1temp}
\setcounter{#1temp}{\value{theorem}}
\setcounter{theorem}{\value{#1}}
\begin{theorem}
#2
\end{theorem}
\setcounter{theorem}{\value{#1temp}}
}
\vspace{-1.5em}
}
\newcommand{\thmpostponedwname}[3]
{
\newcounter{#1}
\setcounter{#1}{\value{theorem}}
\begin{theorem}[#2]
\label{thm:#1}
#3
\end{theorem}
\expandafter\def\csname #1\endcsname{
\newcounter{#1temp}
\setcounter{#1temp}{\value{theorem}}
\setcounter{theorem}{\value{#1}}
\begin{theorem}[#2]
#3
\end{theorem}
\setcounter{theorem}{\value{#1temp}}
}
\vspace{-1.5em}
}
\newcommand{\lemmaproofpostponedwname}[4]
{
\newcounter{#1}
\newcounter{#1temp}
\setcounter{#1}{\value{lemma}}
\begin{lemma}[#2]
\label{#1}
#3
\end{lemma}
\expandafter\def\csname #1\endcsname{
\setcounter{#1temp}{\value{lemma}}
\setcounter{lemma}{\value{#1}}
\begin{lemma}
#3
\end{lemma}
\setcounter{theorem}{\value{#1temp}}
\begin{proofL}{#1}
#4
\end{proofL}
}
}
\newcommand{\thmproofpostponedwname}[4]
{
\newcounter{#1}
\newcounter{#1temp}
\setcounter{#1}{\value{theorem}}
\begin{theorem}[#2]
\label{#1}
#3
\end{theorem}
\expandafter\def\csname #1\endcsname{
\setcounter{#1temp}{\value{theorem}}
\setcounter{theorem}{\value{#1}}
\begin{theorem}
#3
\end{theorem}
\setcounter{theorem}{\value{#1temp}}
\begin{proofT}{#1}
#4
\end{proofT}
}
}
\newcommand{\lemmapostponed}[2]
{
\newcounter{#1}
\newcounter{#1temp}
\setcounter{#1}{\value{theorem}}
\begin{lemma}
\label{#1}
#2
\end{lemma}
\expandafter\def\csname #1\endcsname{
\setcounter{#1temp}{\value{theorem}}
\setcounter{theorem}{\value{#1}}
\begin{lemma}
#2
\end{lemma}
\setcounter{theorem}{\value{#1temp}}
}
\vspace{-1.5em}
}
\newcommand{\lemmapostponedwname}[3]
{
\newcounter{#1}
\newcounter{#1temp}
\setcounter{#1}{\value{theorem}}
\begin{lemma}[#2]
\label{#1}
#3
\end{lemma}
\expandafter\def\csname #1\endcsname{
\setcounter{#1temp}{\value{theorem}}
\setcounter{theorem}{\value{#1}}
\begin{lemma}[#2]
#3
\end{lemma}
\setcounter{theorem}{\value{#1temp}}
}
\vspace{-1.5em}
}
\newcommand{\lemmaproofpostponed}[3]
{
\newcounter{#1}
\newcounter{#1temp}
\setcounter{#1}{\value{lemma}}
\begin{lemma}
\label{#1}
#2
\end{lemma}
\expandafter\def\csname #1\endcsname{
\setcounter{#1temp}{\value{lemma}}
\setcounter{lemma}{\value{#1}}
\begin{lemma}
#2
\end{lemma}
\setcounter{theorem}{\value{#1temp}}
\begin{proofL}{#1}
#3
\end{proofL}
}
}
\newcommand{\corollaryproofpostponed}[3]
{
\newcounter{#1}
\newcounter{#1temp}
\setcounter{#1}{\value{corollary}}
\begin{corollary}
\label{#1}
#2
\end{corollary}
\expandafter\def\csname #1\endcsname{
\setcounter{#1temp}{\value{corollary}}
\setcounter{corollary}{\value{#1}}
\begin{corollary}
#2
\end{corollary}
\setcounter{corollary}{\value{#1temp}}
\begin{proofC}
#3
\renewcommand{\toto}{#1}
\end{proofC}
}
}
\newcommand{\propproofpostponed}[3]
{
\newcounter{#1}
\newcounter{#1temp}
\setcounter{#1}{\value{prop}}
\begin{prop}
\label{#1}
#2
\end{prop}
\expandafter\def\csname #1\endcsname{
\setcounter{#1temp}{\value{prop}}
\setcounter{prop}{\value{#1}}
\begin{prop}
#2
\end{prop}
\setcounter{prop}{\value{#1temp}}
\begin{proofP}{#1}
#3
\renewcommand{\toto}{#1}
\end{proofP}
}
}
\newcommand{\claimproofpostponed}[3]
{
\newcounter{#1}
\newcounter{#1temp}
\setcounter{#1}{\value{claim}}
\begin{claim}
\label{#1}
#2
\end{prop}
\expandafter\def\csname #1\endcsname{
\setcounter{#1temp}{\value{claim}}
\setcounter{claim}{\value{#1}}
\begin{claim}
#2
\end{claim}
\setcounter{prop}{\value{#1temp}}
\begin{claimCl}{#1}
#3
\renewcommand{\toto}{#1}
\end{claimCl}
}
}
\newcounter{casenum}
\renewcommand{\paragraph}[1]{\vspace{0.02cm} \noindent\textbf{#1}\hspace{0.15em}}
\newcommand{\toto}{xxx}
\newtheorem{definition}{Definition} % [section]
\begin{document}
%!TEX root = activityarray-full.tex

%\title{A Long-Lived Probabilistic Activity Array}
\title{The LevelArray: A Fast, Practical  Long-Lived Renaming Algorithm}
%\title{A Fast Self-Healing Probabilistic Activity Array}
%\title{A Sub-Logarithmic-Time Self-Healing Probabilistic Activity Array}

\date{}

\author{ Dan Alistarh\footnote{Part of this work was performed while the author was a Postdoctoral Fellow at MIT CSAIL.} \\ \small{Microsoft Research Cambridge}
\and Justin Kopinsky \\ \small{MIT}
\and Alexander Matveev \\ \small{MIT}
\and Nir Shavit \\ \small{MIT and Tel-Aviv University}}

\maketitle

\begin{abstract}
The long-lived renaming problem appears in shared-memory systems where a set of threads need to register and deregister frequently from 
the computation, while concurrent operations scan the set of currently registered threads. Instances of this problem show up in concurrent implementations 
of transactional memory, flat combining, thread barriers, and memory reclamation schemes for lock-free data structures. 

In this paper, we analyze a randomized solution for long-lived renaming. 
The algorithmic technique we consider, called the \emph{LevelArray}, has previously been used for 
hashing and one-shot (single-use) renaming. 
Our main contribution is to prove that, in \emph{long-lived}  executions, where processes may register and deregister polynomially many times, the technique guarantees \emph{constant} steps on average and $O( \log \log n )$ steps with high probability for registering, \emph{unit} cost for deregistering, and $O(n)$ steps for collect queries, where $n$ is an upper bound on the number of processes that may be active at any point in time. 
We also show that the algorithm has the surprising property that it is \emph{self-healing}: under reasonable assumptions on the schedule, operations running while the data
structure is in a degraded state implicitly help the data structure re-balance itself. This subtle mechanism obviates the need for expensive periodic rebuilding procedures. 

Our benchmarks validate this approach, showing that, for typical use parameters, 
the average number of steps a process takes to register is less than \emph{two} and the worst-case number of steps is
bounded by \emph{six}, even in executions with billions of operations. We contrast this with other randomized
implementations, whose worst-case behavior we show to be unreliable, and with deterministic implementations, whose cost
is linear in $n$. 

% 
% We implement our approach and test it against other mechanisms. The \lit{LevelArray} is essentially
% as fast on average as the 
\end{abstract}
% 
% \vspace{3.5in}
%  \noindent Contact author: 
%  \\ Justin Kopinsky (jkopin@mit.edu) \\ 
%  Regular paper. \\ 
%  Eligible for the Best Student Paper Award (Justin Kopinsky and Alexander Matveev are students.) 
% \thispagestyle{empty}
% \newpage

%!TEX root = activityarray-full.tex
\setcounter{page}{1}
\section{Introduction}
\label{sec:intro}

Several shared-memory coordination problems can be reduced to the following task: a set of threads dynamically register
and deregister from the computation, while other threads periodically query the set of registered threads. A standard
example is memory management for lock-free data structures, e.g.~\cite{DHLM11}: threads accessing the
data structure need to 
register their operations, to ensure that a memory location which they are accessing does not get freed while
still being addressed. Worker threads must register and deregister efficiently, while the ``garbage
collector" thread queries the set of registered processes periodically to see which memory locations can be freed. 
Similar mechanisms are employed in software transactional memory
(STM),  e.g.~\cite{DMS10,AfekMS12},  to detect conflicts between reader and writer threads, in flat
combining~\cite{HISM10} to determine which
threads have work to be performed, and in shared-memory barrier algorithms~\cite{HSBook}.  
In most applications, the time to complete  registration directly affects the performance of the method
calls that use it. 

Variants of the problem are known under different names: in a theoretical setting, it has been
formalized as \emph{long-lived renaming}, e.g.~\cite{MoirA95, BEW11, AttiyaF01}; in a practical setting, it
is known as
\emph{dynamic collect}~\cite{DHLM11}. Regardless of the name, requirements are similar: for good performance,
processes should register and deregister quickly, since these operations are very frequent.
Furthermore, the data structure should be space-efficient, and its performance should depend on the contention
level.

Many known solutions for this problem, e.g.~\cite{DHLM11, MoirA95}, are based on an approach we call the \emph{activity
array}.
Processes share a set of memory locations, whose size is in the order of the number of threads $n$. A thread
\emph{registers} by
acquiring a location through a \lit{test-and-set} or \lit{compare-and-swap} operation, and
\emph{deregisters} by
re-setting the location to its initial state. A \emph{collect} query simply scans the array to determine which processes
are currently
registered.\footnote{The trivial solution where a thread simply uses its identifier as the index of a unique array location is inefficient, 
since the complexity of the collect would depend on the size of the id space, instead of the maximal contention $n$.} 
The activity array has the advantage of relative simplicity and good performance in practice~\cite{DHLM11}, due
to the array's good cache behavior during collects.

One key difference between its  various 
implementations is the way the register operation is implemented. A simple strategy is to
scan the array from left to right, until the first free location is found~\cite{DHLM11}, incurring \emph{linear}
complexity on average. A more complex procedure is to probe locations chosen at random or based on a hash
function~\cite{AHM11, AAGGG10}, or to proceed by probing linearly from a randomly chosen location. The
expected step complexity of this second approach should be constant on average, and at least logarithmic in the worst
case. One disadvantage of known randomized approaches, which we also illustrate in our experiments, is that their worst-case performance is not stable over long executions: 
while most operations will be fast, there always exist operations which take a long time. 
Also, in the case of linear probing, the performance of the data structure is known to degrade over time, a phenomenon known as 
\emph{primary clustering}~\cite{McAllister08}. 

It is therefore natural to ask if there exist 
solutions which combine the \emph{good average performance} of randomized techniques with the 
of \emph{stable worst-case bounds} of deterministic algorithms.

In this paper, we show that such efficient solutions exist, by proposing a long-lived activity array with sub-logaritmic
worst-case time for registering, and stable worst-case behavior in practice. The algorithm, called \emph{LevelArray}, guarantees
\emph{constant} average complexity and $O( \log \log n)$ step complexity with high probability for registering,
\emph{unit} step complexity for deregistering, and \emph{linear} step complexity for collect queries. 
Crucially, our analysis shows that these properties
are guaranteed over \emph{long-lived} executions, where processes may register, deregister, and collect polynomially
many times against an oblivious adversarial scheduler. 

The above properties should be sufficient for good performance in long-lived executions. Indeed, even if the
performance of the data structure were to degrade over time, as is the case with  hashing
techniques, e.g.~\cite{McAllister08},
we could rebuild the data structure periodically, preserving the bounds in an amortized sense. However, our analysis 
%(which we also confirm empirically) 
shows that this explicit rebuilding mechanism is not necessary since the data structure is ``self-healing.''
 Under reasonable assumptions on the schedule, even if the data structure ends up in extremely unbalanced state 
 (possible during an infinite execution), the deregister and register operations
running from this state automatically re-balance the data structure with high probability. 
The self-healing property removes the need for explicit 
rebuilding. 

The basic idea behind the algorithm is simple, and has been used previously for efficient
hashing~\cite{BK90} and
\emph{one-shot}\footnote{One-shot renaming~\cite{ABDPR90} is the variant of the problem where processes only
register once, and deregistration is not possible.} randomized renaming~\cite{AAGW13}. We consider an array of size $2n$,
where $n$ is an upper bound on contention. We
split the locations into $O( \log n)$ \emph{levels}: the first (indexed by $0$) contains the first $3n / 2$ locations, the second
contains
the next $n / 4$ and so on, with the $i$th level containing $n / 2^{i}$ locations, for $i \geq 1$. To
register, each process
performs a  
\emph{constant} number of \lit{test-and-set} probes at each level, stopping the first time when it acquires a location.
Deregistering is performed by simply resetting the location, while collecting is done by scanning the $2n$ locations. 

This algorithm clearly solves the problem, the only question is its complexity. The intuitive reason
why this procedure runs in $O( \log \log n )$ time in a one-shot execution is that, as processes
proceed towards higher
levels, the number of
processes competing in a level $i$ is $O( n / 2^{2^i})$, while the space available is $\Theta (n / 2^i)$. By level
$\Theta(\log \log n)$, there are virtually no more processes competing. This intuition was formally captured and proven in \cite{BK90}. However, it was not clear if anything close to this efficient behavior holds true in the long-lived case where threads continuously register and deregister. 

The main technical contribution of our paper is showing that this procedure does indeed work in long-lived polynomial-length 
executions,
and, perhaps more surprisingly, requires no re-building over infinite executions, given an oblivious adversarial scheduler. 
The main challenge is in bounding the correlations between the processes' operations, and 
in analyzing the properties of the resulting probability distribution over the data structure's state. 
More precisely, we identify a ``balanced'' family of probability distributions over the level occupancy under which most operations are fast. 
 We then analyze sequences of operations of increasing length, 
and prove that they are likely to keep the data structure balanced, despite the fact that the scheduling and the process input may be correlated in 
arbitrary ways (see Proposition~\ref{prop:overcrowded}). One further difficulty comes from the fact that we allow 
the adversary to insert arbitrary sequences of operations between a thread's 
register and the corresponding deregister (see Lemma~\ref{timeinvariant}), as is the case in a real execution. 

The previous argument does not preclude the data
structure from entering an unbalanced state over an infinite execution. (Since
it has non-zero probability, such an event will eventually occur.) This motivates us to analyze such executions as
well. We show that, assuming the system schedules polynomially many steps between the time a process starts a
register operation and the time it deregisters,\footnote{This assumption prevents 
unrealistic schedules in which the adversary brings the data structure in an unbalanced state, and then schedules a
small set of threads to register and unregister infinitely many times, keeping the data structure in roughly the same
state while inducing high expected cost on the threads.}
the data structure will rebalance itself from an arbitrary initial state, with high probability. 

Specifically, in a bad state, the array may be
arbitrarily shifted away from this good distribution. We prove that, as more and more operations release slots and
occupy new ones, the data structure gradually shifts back to a good distribution, which is reached with high probability
after polynomially many system steps are taken. Since this shift must occur from \emph{any} unbalanced state, it follows
that, in fact, every state is well balanced with high probability. Finally, this implies that every operation verifies
the $O(\log \log n)$ complexity upper bound with high probability. 

From a theoretical perspective, the \lit{LevelArray} algorithm solves non-adaptive
long-lived renaming in $O( \log \log n )$ steps with
high probability, against an oblivious adversary in polynomial-length executions. The same guarantees are provided in
infinite executions under scheduler assumptions. 
We note that our analysis can also be extended to provide worst-case bounds on the long-lived performance of the Broder-Karlin hashing algorithm~\cite{BK90}. (Their analysis is one-shot, which is standard for hashing.)  

The algorithm is \emph{wait-free}. The logarithmic lower bound
of Alistarh et al.~\cite{AAGG11}
on the complexity of one-shot randomized adaptive 
renaming is circumvented since the algorithm is not namespace-adaptive. The algorithm is time-optimal for one-shot renaming when 
linear space and \lit{test-and-set} operations are used~\cite{AAGW13}. 

We validate this approach through several benchmarks. Broadly, the tests show that, for common use parameters, the data
structure guarantees fast registration---less than two probes on average for an array of size $2n$---and
that the performance is surprisingly stable when dealing with contention and long executions. To illustrate, in a
benchmark with approximately one billion register and unregister operations with $80$ concurrent threads, the maximum
number of probes performed by \emph{any} operation was \emph{six}, while the average number of probes for registering 
was around $1.75$.

The data structure compares favorably to other randomized and deterministic techniques.
In particular,
the worst-case number of steps performed is at least
an order of magnitude lower than that of any other implementation.
We also tested the ``healing'' property by initializing the data structure in a bad state and running a typical schedule
from that state. The data structure does indeed converge to a balanced distribution (see
Figure~\ref{fig:histogram}); interestingly, the convergence speed towards the good state is higher than
predicted by the analysis. 

\paragraph{Roadmap.} Section~\ref{sec:model} presents the system model and problem statement. Section~\ref{sec:rw}
gives an overview of related work. We present the algorithm in Section~\ref{sec:algorithm}. Section~\ref{sec:poly} gives
the analysis of polynomial-length executions, while Section~\ref{sec:infty-analysis} considers infinite executions. We
present the implementation results in Section~\ref{sec:results}, and conclude in Section~\ref{sec:conclusion}. 
%!TEX root = activityarray.tex

\section{System Model and Problem Statement}
\label{sec:model}

We assume the standard asynchronous shared memory model with $N$ processes (or threads) $p_1, \ldots, p_N$, 
out of which at most $n \leq N$ participate in any execution. (Therefore, $n$ can be seen as an upper bound on the contention in an execution.)
To simplify the exposition, 
in the analysis, we will denote the $n$ participants by $p_1, p_2, \ldots, p_n$, although the identifier $i$ is unknown to process $p_i$ in the actual execution. 

Processes communicate by performing operations on shared registers, specifically \lit{read}, \lit{write}, \lit{test-and-set} or
\lit{compare-and-swap}. Our algorithm only employs \lit{test-and-set} operations. (\lit{Test-and-set}
operations can be
simulated either using reads and writes with randomization~\cite{AG92}, or atomic \lit{compare-and-swap}. 
Alternatively, we can use the adaptive \lit{test-and-set} construction of Giakkoupis and Woelfel~\cite{GW12} 
to implement our algorithm using only reads and writes with an extra multiplicative $O(\log^* n)$ factor in the running time.) We say that
a process \emph{wins} a \lit{test-and-set} operation if it manages to change the value of the location from $0$ to $1$.
Otherwise, it \emph{loses} the operation. The winner may later \emph{reset} the location by setting it back to $0$. 
 We assume that each process has a local random number generator,
accessible through the call $\lit{random}(1, v)$, which returns a uniformly random integer between $1$ and $v$. 

The processes' input and their scheduling are controlled by an \emph{oblivious adversary}. The adversary knows the
algorithm and the distributions from which processes draw coins, but does not see the results of the local coin flips or
of other operations performed during the execution. Equivalently, the adversary must decide on the complete schedule and input
\emph{before} the algorithm's execution.  

An \emph{activity array} data structure exports
three operations. The $\lit{Get}()$ operation returns a unique index to the process; $\lit{Free}()$
releases the index returned by the most recent $\lit{Get}()$, while $\lit{Collect}()$ returns a set of indices, 
such that any index held by a process throughout the $\lit{Collect}()$ call must be returned. 
The adversary may also require processes to take steps running arbitrary algorithms between activity array operations. We model this by allowing the adversary to introduce a $\Call{}()$ operation, which completes in exactly one step and does not read or write to the activity array. The adversary can simulate longer algorithms by inputting consecutive $\Call{}()$ operations.

The input for each process is \emph{well-formed}, in that \Get{} and \Free{} operations alternate, starting with a
\Get{}. 
\lit{Collect} and \Call{} operations may be interspersed arbitrarily. Both \lit{Get} and \lit{Free}
are required to be linearizable. We say a process \emph{holds} an index $i$ between the linearization points of the \lit{Get}
operation that returned $i$ and that of the corresponding $\lit{Free}$ operation. The key correctness property of the
implementation is that no two processes hold the same index at the same point in time. 
 \lit{Collect} must return a set of indices with the following \emph{validity} property: any index returned by
\lit{Collect} must have been held by some process
 during the execution of the operation. (This operation is not an atomic snapshot of the array.) 

From a theoretical perspective, an activity array implements the \emph{long-lived renaming}
problem~\cite{MoirA95}, where the \lit{Get} and \lit{Free} operations correspond to \lit{GetName} and
\lit{ReleaseName}, respectively. However, the \lit{Collect} operation additionally imposes the requirement that the
names should be enumerable efficiently. The namespace upper bound usually required for renaming can be
translated as an upper bound on the step
complexity of \lit{Collect} and on the space complexity of the overall implementation. 

We focus on the \emph{step complexity} metric, i.e. the number of steps that a process performs while executing an
operation. We say that an event occurs \emph{with high probability} (w.h.p.) if its probability is at least $1 - 1 /
n^\gamma$, for $\gamma \geq 1$ constant.

\section{Related Work}
\label{sec:rw}

% - long-lived renaming 
% - activity arrays
% - hashing and the rebatching algorithm

The \emph{long-lived renaming} problem was introduced by Moir and Anderson~\cite{MoirA95}. (A similar
variant of renaming~\cite{ABDPR90} had been previously considered by Burns and Peterson~\cite{BP89}.) Moir and
Anderson 
presented several deterministic algorithms, assuming various shared-memory primitives. In
particular, they introduced an array-based algorithm where each process probes $n$ locations linearly using
\lit{test-and-set}. A similar algorithm was considered in the context of $k$-exclusion~\cite{AM97}. 
A considerable amount of subsequent research, e.g.~\cite{Moir98, AST02, AF03, BEW11} studied faster deterministic
solutions for long-lived renaming. To the best of our knowledge, all these algorithms either
have linear or super-linear
step complexity~\cite{AST02, AF03, BEW11}, or employ strong primitives such as \lit{set-first-zero}~\cite{MoirA95},
which are not available in general. Linear time complexity is known to be inherent for deterministic renaming algorithms
which employ \lit{read}, \lit{write}, \lit{test-and-set} and \lit{compare-and-swap} operations~\cite{AAGG11}.
For a complete overview of known approaches for deterministic long-lived renaming 
we direct the reader to reference~\cite{BEW11}. Despite progress on the use of randomization for fast \emph{one-shot}
renaming, e.g.~\cite{AACGZ11}, no randomized algorithms for long-lived renaming were known prior to our work. 

The idea of splitting the space into levels to minimize the number of collisions was first used by Broder and
Karlin~\cite{BK90} in the context of hashing. Recently,~\cite{AAGW13} used a 
similar idea to obtain a \emph{one-shot} randomized loose renaming algorithm against a strong adversary. 
Both references~\cite{BK90, AAGW13} obtain expected worst-case complexity $O( \log \log n )$ in \emph{one-shot}
executions, consisting of exactly one \Get{} per thread and do not consider long-lived executions. 
In particular, our work can be seen as an extension of~\cite{AAGW13} for the long-lived case, against an oblivious adversary. 
Our analysis will imply the upper bounds of~\cite{BK90, AAGW13} in one-shot executions, 
as it is not significantly affected by the strong adversary in the one-shot case. 
However, our focus in this paper is analyzing \emph{long-lived} polynomial and infinite executions, and 
showing that the technique is viable in practice.

A more applied line of research~\cite{ROP, DHLM11} employed data structures similar to activity arrays in the context of
memory reclamation for lock-free data structures. One important difference from our approach is that the solutions
considered are deterministic, and have $\Omega(n)$ complexity for registering since processes perform probes
linearly.  The
\emph{dynamic collect} problem defined in~\cite{DHLM11} has similar semantics to those of the activity array,
and adds operations that are specific to memory management.

\section{The Algorithm} 
\label{sec:algorithm}

The algorithm is based on a shared array of size linear in $n$, where the array locations are associated to consecutive
indices.
A process \emph{registers} at a location by performing a successful \lit{test-and-set} operation on that location, and
\emph{releases} the location by re-setting the location to its initial value. A process performing a \lit{Collect}
simply reads the whole array in sequence. The challenge is to choose the locations at which the \Get{} operation
attempts to register so as to minimize contention and find a free location quickly. 

\begin{figure}[t]
\begin{center}
\includegraphics[scale=0.4]{./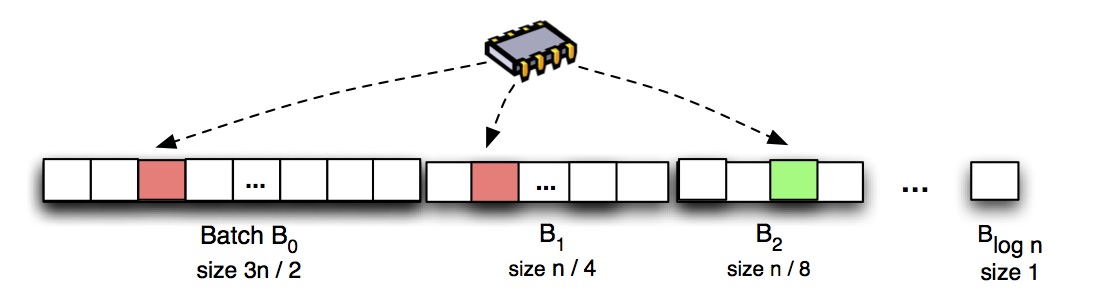}
\end{center}
\caption{An illustration of the algorithm's execution. A process probes locations in batches of increasing index, until
successful.}
\label{fig:algo}
\end{figure} 

Specifically, consider an array of size $2n$.\footnote{The algorithm 
works with small modifications for an array of size $(1 + \epsilon)n$, for $\epsilon > 0$ constant. This
yields a more complicated exposition without adding insight, therefore we exclusively consider the case $\epsilon =
1$.} 
The locations in the arrays are all initially set to $0$. 
The array is split into $\log n$ \emph{batches} $B_0, B_1, \ldots, B_{\log n - 1}$
such that $B_0$ consists of the first $\lfloor 3 n / 2 \rfloor$ memory locations, and each subsequent batch $B_i$ with
$i \geq 1$
consists of the first
$\lfloor n / 2^{i + 1} \rfloor$ entries after the end of batch $i - 1$. Clearly, this array is of size at 
most $2n$. (For simplicity, we omit the floor
notation in the following, assuming that $n$ is a power of two.) 

\lit{Get} is implemented as follows: the calling process accesses each batch $B_i$ in increasing order by index. In each batch $B_i$, the process sequentially attempts $c_i$ \lit{test-and-set} operations
on locations chosen uniformly at random from among all locations in $B_i$, where $c_i$ is a
constant. For the
analysis, it would suffice to consider $c_i = \kappa$ for all $i$, where the constant $\kappa$ is a uniform lower bound
of the $c_i$. We refrain from doing so in order to demonstrate which batches theoretically require higher values of
$c_i$. In particular, larger values of $c_i$ will be required to obtain high concentration bounds in later batches. In the implementation, we simply take $c_i = 1$ for all $i$. 

A process stops once it \emph{wins} a \lit{test-and-set} operation, and stores the index of the corresponding memory
location locally. When calling $\lit{Free}$, the process resets this location back to $0$. If, hypothetically, a
process reaches the last batch in the main array without stopping, losing all \lit{test-and-set} attempts, it will
proceed to probe sequentially all locations in a second backup array, of size exactly $n$. In this case, the process
would return $2n$ plus the index obtained from the second array as its value. Our analysis will show that the backup
is essentially never called. 

%!TEX root = activityarray-full.tex

\section{Analysis}
\label{sec:analysis}

\paragraph{Preliminaries.} We fix an arbitrary
execution, and define the linearization order
for the \lit{Get} and \lit{Free} operations in the execution as follows. The linearization point for a \lit{Get}
operation is
given
by the time at which the successful \lit{test-and-set} operation occurs, while the linearization point for the
\lit{Free} procedure is given by the time at which the \lit{reset} operation occurs. (Notice that, since we assume a
hardware test-and-set implementation, the issues concerning linearizability in the context of randomization brought up
in~\cite{GHW11} are circumvented.)

\paragraph{Inputs and Executions.} We assume that the \emph{input} to each process is composed of four types of operations: \Get{}, \Free{}, \Collect{} and \Call{}, each as defined in Section~\ref{sec:model}. The \emph{schedule} is given by a string of process IDs, where the process ID appearing at the $i^{th}$ location in the schedule indicates which process takes a step at the $i^{th}$ time step of the execution. An \emph{execution} is characterized by the schedule together with the inputs to each process.

\paragraph{Running Time.} The correctness of the algorithm is straightforward. 
Therefore, for the rest of this section, we focus on the running time analysis.
We are interested in two parameters: the \emph{worst-case} running time i.e. the maximum number of probes that a
process
performs in order to register, and the \emph{average} running time, given by the expected number of probes performed
by an operation. We will look at these parameters first in polynomial-length executions, and then in
infinite-length executions.

Notice that the algorithm's execution is entirely specified by the schedule $\sigma$ (a series of process identifiers,
given by
the adversary), and by the processes' coin flips, unknown to the adversary when deciding the schedule. The
schedule is composed of low-level steps (shared-memory operations), which can be grouped into method calls. We say
that an event occurs at time $t$ in the execution if it occurs between steps $t$ and $t + 1$ in the
schedule $\sigma$. Let $\sigma_t$ be the $t$-th process identifier in the schedule. 
Further, the processes' random choices define a probability space, in which the algorithm's 
complexity is a random variable. 

Our analysis will focus on the first $\log \log n$ batches, as processes access later batches extremely rarely. Fix the constant 
$c = \max_k c_k$ to be the maximum number of trials in a batch. 
We say that a \lit{Get} operation
\emph{reaches} batch $B_j$ if it probes at least one location in the batch. 
For each batch index $j \in \{0, \ldots, \log \log n - 1\}$, we define $\pi_j$ to be $1$ for $j = 0$ and $1 / 2^{2^{j} +
5}$ for
$j \geq 1$, and $n_j$ to be $n$ if $j = 0$ and $n / 2^{2^{j} + 5}$ for $j \geq 1$, i.e., $n_j = \pi_j n$.
We now define properties of the probability distribution over batches, and of the array density.

\begin{definition}[Regular Operations]\label{def:regular}
We say that a $\lit{Get}$ operation is
\emph{regular} up to batch $0 \leq j \leq \log \log n - 1$, if, for any batch index $0 \leq k \leq j $, the probability
that the operation reaches batch $k$ is at most $\pi_k$. An operation is \emph{fully regular} if it is regular up to
batch $\log \log n - 1$. 
\end{definition}

\begin{definition}[Overcrowded Batches and Balanced Arrays]\label{def:balanced}
We say that a batch $j$ is \emph{overcrowded} at some time $t$ if at least $16 n_j = n / 2^{{2^j} + 1}$ distinct slots are occupied in batch 
$j$ at
time $t$. We say that the array is \emph{balanced} up to batch $j$ at time $t$ if none of the batches $0, \ldots, j$ are
overcrowded at time $t$. We say that the array is \emph{fully balanced} at time $t$ if it is balanced up to batch $\log
\log n - 1$. 
\end{definition}

In the following, we will
consider both polynomial-length executions and infinite executions. We will prove that in the first case, the array is
fully balanced throughout the execution with high probability, which will imply the complexity upper bounds. In the
second case, we show that the
data structure quickly returns to a fully balanced state even after becoming arbitrarily degraded, which implies low complexity for
most
operations. 

\subsection{Analysis of Polynomial-Length Executions}
\label{sec:poly}

We consider the complexity of the algorithm in executions consisting of $O( n^\alpha )$ 
\lit{Get} and \lit{Free} operations, where $\alpha \geq 1$ is a constant. A thread may have arbitrarily many \Call{} steps throughout its input. 
Our main claim is the following. 

\begin{theorem}
\label{thm:poly}
For $\alpha > 1$, given an arbitrary execution containing $O(n^\alpha)$ \lit{Get} and \lit{Free} operations, the
expected complexity of a \lit{Get} operation is constant, while its worst-case complexity is $O( \log \log n )$, with 
probability at least $1 - 1 / n^\gamma$, with $\gamma > 0$ constant. 
\end{theorem}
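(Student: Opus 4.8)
The plan is to show that, except with probability at most $1/n^\gamma$, the array stays \emph{fully balanced} (Definition~\ref{def:balanced}) at \emph{every} step of the whole $O(n^\alpha)$-operation execution, and then to read off both bounds from this single event together with the per-operation regularity it entails. The engine is a mutual induction on the batch index $j$, run from $0$ up to $\log\log n$, maintaining two coupled invariants: (i) every \Get{} operation is regular up to batch $j$ in the sense of Definition~\ref{def:regular}, i.e.\ reaches each batch $k\le j$ with probability at most $\pi_k$; and (ii) the array is balanced up to batch $j$ at all times, except with probability at most $1/n^{\gamma+1}$. The base case $j=0$ is immediate: $\Pr[\text{reach }B_0]=1=\pi_0$, and $B_0$ can never be overcrowded, since it has $3n/2$ slots while at most $n$ processes are ever active, so at most $n<16n_0$ of them are occupied at any instant.

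The first half of the inductive step is the implication \emph{balance $\Rightarrow$ regularity}. Condition on the array being balanced up to batch $j-1$ throughout the parts of the execution where \Get{} operations probe batches $0,\dots,j-1$ (the inductive hypothesis, whose small failure probability is absorbed into the bound for invariant (ii)). In a non-overcrowded batch $i\ge 1$, at most $16n_i$ of its $|B_i|=\Theta(n/2^i)$ slots are occupied at any instant, so a single uniform \lit{test-and-set} probe hits a free slot with probability at least $1-2^{\,i-2^{i}}\ge 1/2$; in $B_0$ the same holds with probability at least $1/3$. Hence an operation that reaches $B_i$ fails all $c_i$ probes there with probability at most $\big(2^{\,i-2^{i}}\big)^{c_i}$, and unrolling $\Pr[\text{reach }B_{i+1}]\le\Pr[\text{reach }B_i]\cdot\big(2^{\,i-2^{i}}\big)^{c_i}$ from $\Pr[\text{reach }B_0]=1$ shows that constant probe counts $c_i$ (a moderately large constant for $i=0$; a suitably larger constant, depending on $\alpha$ and $\gamma$, for the last $O(1)$ batches; and $c_i=2$ otherwise) keep $\Pr[\text{reach }B_{i+1}]\le\pi_{i+1}$. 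Here Lemma~\ref{timeinvariant} is exactly what makes this per-operation bound immune to the arbitrary \Call{}/\Get{}/\Free{} activity the adversary may interleave between a thread's register and the corresponding deregister.

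The second half is \emph{regularity $\Rightarrow$ balance}, the content of Proposition~\ref{prop:overcrowded}. Fix a time $t$ and let $X^j_t$ count the processes whose currently held slot lies in batch $j$. At most $n$ processes are active at $t$, and by regularity together with Lemma~\ref{timeinvariant} each holds a batch-$j$ slot with probability at most $\pi_j$, so $\mathbb{E}[X^j_t]\le n\pi_j=n_j$. For the bulk of the batches, $j\le\log\log n-1$, one has $n_j\ge\sqrt n/32=\omega(\log n)$, so a Chernoff bound gives $\Pr[X^j_t\ge 16n_j]\le e^{-\Omega(n_j)}=e^{-\Omega(\sqrt n)}$, which comfortably survives a union bound over the polynomially many relevant (array-modifying step, batch) pairs. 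Chaining the two implications from $j=0$ to $j=\log\log n-1$ makes the array balanced up to batch $\log\log n-1$ throughout and every operation regular up to $\log\log n-1$; one further turn of the recursion, using the larger constant $c_{\log\log n-1}$, pushes $\Pr[\text{a given }\Get{}\text{ reaches }B_{\log\log n}]$ below $n^{-\alpha-\gamma-1}$, so that by a union bound over the $O(n^\alpha)$ operations, \emph{no} \Get{} ever reaches $B_{\log\log n}$ with probability at least $1-1/n^\gamma$. On that event every \Get{} does at most $\sum_{i<\log\log n}c_i+1=O(\log\log n)$ probes, giving the worst-case bound; and for any \Get{} the expected number of probes is at most $\sum_{j\ge 0}\pi_j c_j$ plus an $O(n)\cdot e^{-\Omega(\sqrt n)}=o(1)$ contribution from the complementary event, and $\sum_{j\ge 0}\pi_j c_j=c_0+\sum_{j\ge 1}\Theta(2^{-2^{j}})=O(1)$.

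The main obstacle is the concentration step inside Proposition~\ref{prop:overcrowded}: the indicators ``process $X$ currently holds a batch-$j$ slot'', ranging over the active processes, are \emph{not} independent---they are correlated both through the single shared schedule and through \lit{test-and-set} contention inside a batch (one process capturing a slot alters the outcomes of the others). The plan to handle this is to fix the adversary's schedule, reveal the processes' coins in schedule order, and argue that once we condition on the inductively-established balance of batches $0,\dots,j-1$, each operation's trajectory through those batches becomes a function of that operation's own coins for which the reaching-probability bound $\pi_j$ holds \emph{pointwise}---obtained via the pessimistic coupling that treats every earlier batch as occupied exactly up to its non-overcrowded threshold. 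Then $X^j_t$ is stochastically dominated by a sum of independent $\mathrm{Bernoulli}(\pi_j)$ variables and the Chernoff bound applies. Making this domination rigorous despite the circular dependence between the occupancy trajectories and the coins, and carrying it through a polynomial-length adversarially scheduled execution, is where essentially all of the work lies.
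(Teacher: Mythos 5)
Your outline is essentially the paper's argument---the same two-way traffic between ``balanced $\Rightarrow$ regular'' (Proposition~\ref{prop:prob} plus Lemma~\ref{timeinvariant}) and ``regular $\Rightarrow$ concentration $\Rightarrow$ balanced'' (Proposition~\ref{chernoffapp}), chained inductively and finished with a union bound over the $O(n^\alpha)$ operations and a larger probe count in batch $\log\log n - 1$. The organizational difference is that you induct on the batch index $j$ with all times quantified inside each step, whereas the paper inducts on the linearization index of the \Get{} operations (Proposition~\ref{prop:overcrowded}) and handles all $\log\log n$ batches at each step by a union bound; both orderings are sound, and yours arguably makes the mutual dependence of the two invariants more explicit. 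The substantive point is the step you flag as ``where essentially all of the work lies'': the concentration bound for the dependent indicators $X^q_j$. You propose to build an explicit stochastic domination by independent $\mathrm{Bernoulli}(\pi_j)$ variables via a pessimistic coupling, and you correctly observe that the circularity between occupancy trajectories and coins makes this delicate. The paper sidesteps the coupling entirely: it invokes the generalized Chernoff bound of Panconesi and Srinivasan (Lemma~\ref{lem:chernoff}), which requires only that $\Pr\left( \wedge_{i \in S} X_i \right) \leq p^{|S|}$ for every subset $S$, and verifies this hypothesis by the chain rule, using the fact that conditioned on \emph{any} outcome of the other processes consistent with the array being balanced, a given process still holds a batch-$j$ slot with probability at most $c_j\pi_j$ (this is where the $c_j$ factor from Lemma~\ref{timeinvariant} enters; your sketch drops it, which is harmless since $c_j$ is constant). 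So the tool you are reaching for already exists in closed form, and adopting it would close the one step your proposal leaves open without any new coupling argument.
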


\noindent We now state a generalized version of the Chernoff bound that we will be using in the rest of this section. 
\begin{lemma}[Generalized Chernoff Bound~\cite{PS97}]
\label{lem:chernoff}
For $m \geq 1$, let $X_1, \ldots X_m$ be boolean random variables (not necessarily independent) with $\Pr[X_i = 1]
\leq p$, for all $i$. If, for any subset $S$ of $\{1, 2, \ldots, m\}$,  we have that $\Pr \left( \wedge_{i \in S}
X_i  \right) \leq p^{|S|}$, then we have that, 
for any $\delta > 0$, 
$$\Pr \left( \sum_{i = 1}^n X_i \geq (1 + \delta) np \right) \leq \left(\frac{e^{\delta}}{(1 + \delta)^{1 + \delta}}
\right)^{np}.$$
\end{lemma}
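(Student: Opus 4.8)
The statement is about boolean (not necessarily independent) random variables $X_1,\dots,X_m$ with the property that every subset has its AND bounded by $p^{|S|}$ — the "negatively correlated from above" condition. The plan is to mimic the standard Chernoff/Bernstein moment-generating-function argument, but replace the independence step with the subset-product hypothesis. So: fix $t>0$, and by Markov's inequality applied to $e^{t\sum X_i}$,
$$\Pr\left(\sum_i X_i \geq (1+\delta)mp\right) \leq e^{-t(1+\delta)mp}\,\mathbb{E}\!\left[e^{t\sum_i X_i}\right].$$
(The statement as written sums to $n$; I'd treat $n=m$, i.e. the number of variables.) The work is in bounding $\mathbb{E}[e^{t\sum X_i}] = \mathbb{E}[\prod_i e^{tX_i}]$ without independence.

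The key step is to expand the product $\prod_i e^{tX_i} = \prod_i\big(1 + (e^t-1)X_i\big)$, using that $X_i\in\{0,1\}$ so $e^{tX_i} = 1 + (e^t-1)X_i$ exactly. Multiplying out gives $\sum_{S\subseteq[m]} (e^t-1)^{|S|}\prod_{i\in S} X_i$. Taking expectations term by term and invoking the hypothesis $\mathbb{E}[\prod_{i\in S}X_i] = \Pr(\wedge_{i\in S}X_i)\leq p^{|S|}$ — here it is crucial that $e^t - 1 > 0$ so each coefficient is nonnegative and the inequality goes the right way — yields
$$\mathbb{E}\!\left[e^{t\sum_i X_i}\right] \leq \sum_{S\subseteq[m]} (e^t-1)^{|S|} p^{|S|} = \big(1 + p(e^t-1)\big)^m \leq e^{mp(e^t-1)},$$
using $1+x\leq e^x$. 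This is exactly the bound one gets in the independent case, so from here the calculation is identical to the textbook proof: substitute to get $\Pr(\sum X_i \geq (1+\delta)mp) \leq e^{mp(e^t-1) - t(1+\delta)mp}$, optimize over $t$ by choosing $t = \ln(1+\delta) > 0$ (legal since $\delta>0$), and simplify the exponent to $mp(\delta - (1+\delta)\ln(1+\delta))$, which gives the claimed $\big(e^\delta/(1+\delta)^{1+\delta}\big)^{mp}$.

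The only genuine subtlety — and the one place the proof differs from the classical one — is the sign bookkeeping in the expansion: the argument works precisely because $t>0$ makes every monomial coefficient $(e^t-1)^{|S|}$ nonnegative, so the subset hypothesis (an \emph{upper} bound on each $\Pr(\wedge_{i\in S}X_i)$) can be applied uniformly across all $2^m$ terms. I would also note the degenerate term $S=\emptyset$ contributes $1$ on both sides, which is consistent. Everything else is the standard Chernoff optimization, so I'd state it briefly and cite~\cite{PS97} for the details. No additional concentration machinery or independence is needed.
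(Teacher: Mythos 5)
Your proof is correct: the expansion $e^{tX_i}=1+(e^t-1)X_i$ for boolean $X_i$, the termwise application of the subset hypothesis (valid because every coefficient $(e^t-1)^{|S|}$ is nonnegative for $t>0$), and the standard optimization $t=\ln(1+\delta)$ all go through, recovering the bound $\bigl(e^{\delta}/(1+\delta)^{1+\delta}\bigr)^{mp}$ with $n=m$ as you note. The paper itself offers no proof of this lemma---it is stated with a citation to~\cite{PS97}---and your argument is essentially the standard moment-generating-function proof given there, so there is nothing to reconcile.
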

% \begin{proof}
%  Note that we can use the condition on the random variables $(X_i)_{i \in \{1, m\}}$ to obtain that, for any  subset
%$S$
% of $\{1, \ldots, m\}$ with $\ell \geq 1$ elements we have that 
% $$\Pr \left( \bigwedge_{i \in S} X_i  \right) = \Pr( X_{j_1} ) \cdot \Pr( X_{j_2} | X_{j_1} ) \cdot \ldots \cdot \Pr(
% X_{j_\ell} |
% X_{j_1}, 
% \ldots, X_{j_{\ell - 1}}) \leq
% p^{|S|},$$
%  \noindent where $S = \{X_{j_1}, \ldots, X_{j_\ell}\}$.
%  At this point, the preconditions for the generalized Chernoff bound of Panconesi et al.~\cite{PS97} are met. We
% can therefore apply their bound to $\{X_1, \ldots, X_m\}$ to obtain the claim.
% \end{proof}

Returning to the proof, notice that the running time of a \lit{Get} operation is
influenced by the
probability of success of each of its \lit{test-and-set} operations. In  turn, this probability is influenced by the density of the current batch, which is related to the 
number of
\emph{previous} successful \lit{Get} operations that stopped in the batch. Our strategy is to show that the probability
that a \lit{Get} operation $\id{op}$ reaches a batch $B_j$ decreases doubly exponentially with the batch index
$j$. We prove this by induction on
the linearization time of the operation. Without loss of generality, assume that the execution contains exactly
$n^\alpha$ \Get{} operations, and let $t_1, t_2, \ldots, t_{n^{\alpha}}$ be the times in the execution when these \lit{Get}
operations are linearized. We first prove that \lit{Get} operations are fast while the array is in balanced
state.

\begin{prop}
\label{prop:prob}
 Consider a \lit{Get} operation $\id{op}$ and a batch index $0 \leq j \leq \log \log n - 2$. If at every time $t$ when
$\id{op}$ performs a random choice the Activity Array is balanced up to batch $j$, then, for any $1 \leq k \leq j + 1$,
the probability that $\id{op}$ reaches batch $k$ is at most $\pi_k$. This implies that the operation
is regular up to $j+1$. 
\end{prop}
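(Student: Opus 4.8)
The plan is to prove the statement by induction on $k$, with the base case $k=1$ handled separately and the inductive step $k \to k+1$ carried out for $k \leq j$. Throughout, we use the assumption that whenever $\id{op}$ makes a random choice the array is balanced up to batch $j$, so that every batch $B_0, \dots, B_j$ that $\id{op}$ might probe has at most $16 n_i = n/2^{2^i+1}$ occupied slots at those times. The key quantity to control is $q_k := \Pr[\id{op}\text{ reaches }B_k]$; we want $q_k \leq \pi_k$. Note $\id{op}$ reaches $B_k$ (for $k \geq 1$) if and only if it reached $B_{k-1}$ and lost all $c_{k-1}$ test-and-set probes in $B_{k-1}$, so $q_k = \Pr[\id{op}\text{ reaches }B_{k-1} \text{ and loses all probes in }B_{k-1}]$.

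First I would do the base case $k=1$. Since $\pi_1 = 1/2^{2^1+5} = 1/2^7$, I need to show $\id{op}$ fails all $c_0$ probes in $B_0$ with probability at most $1/2^7$. By balance, at each random choice of $\id{op}$ in $B_0$ there are at most $16 n_0 = n/2$ occupied slots out of $|B_0| = 3n/2$, so each individual probe succeeds with probability at least $(3n/2 - n/2)/(3n/2) = 2/3$, hence fails with probability at most $1/3$ — and crucially this bound holds conditionally on the outcomes of $\id{op}$'s earlier probes, since the array is balanced at every one of $\id{op}$'s random choices regardless of history (the occupancy bound is a hypothesis, not a random event depending on $\id{op}$). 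Thus the probability of failing all $c_0$ probes is at most $(1/3)^{c_0} \leq 1/2^7$ provided $c_0$ is a large enough constant (e.g. $c_0 \geq 5$), which is where the freedom to choose the $c_i$ large is used. For the inductive step, suppose $q_k \leq \pi_k$ for some $1 \leq k \leq j$. For $\id{op}$ to reach $B_{k+1}$ it must reach $B_k$ and then lose all $c_k$ probes in $B_k$. Conditioned on reaching $B_k$, each probe in $B_k$ hits an occupied slot with probability at most $16 n_k / |B_k| = (n/2^{2^k+1})/(n/2^{k+1}) = 2^{k+1}/2^{2^k+1} = 2^{k - 2^k}$ (again, conditionally on $\id{op}$'s earlier probes, by the same reasoning as above), so the conditional probability of losing all $c_k$ probes is at most $(2^{k-2^k})^{c_k}$. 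Therefore
\[
q_{k+1} \;\leq\; q_k \cdot \bigl(2^{k-2^k}\bigr)^{c_k} \;\leq\; \frac{1}{2^{2^k+5}} \cdot 2^{c_k(k - 2^k)}.
\]
It remains to check that the right-hand side is at most $\pi_{k+1} = 1/2^{2^{k+1}+5}$, i.e. that $2^k + 5 + c_k(2^k - k) \geq 2^{k+1} + 5$, i.e. $c_k(2^k - k) \geq 2^k$. Since $2^k - k \geq 1$ for all $k \geq 1$, this holds whenever $c_k \geq 2^k$; more carefully, $2^k/(2^k-k) \leq 2$ for all $k\geq 1$, so $c_k \geq 2$ suffices, which is consistent with the paper's remark that later batches need larger constants only for the high-concentration arguments, not for this step. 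Finally, since $q_k \leq \pi_k$ for every $1 \leq k \leq j+1$, the operation is regular up to $j+1$ by Definition~\ref{def:regular} (the $k=0$ case is trivial as $\pi_0 = 1$).

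The main subtlety — and the step I would be most careful about — is the conditioning argument: one must argue that the per-probe success probability bound $\geq 2/3$ in $B_0$ (and analogously $\leq 2^{k-2^k}$ in $B_k$) holds conditionally on all of $\id{op}$'s previous probe outcomes, so that the "lose all $c_i$ probes" events multiply and the generalized Chernoff bound of Lemma~\ref{lem:chernoff} (or simply independence-free multiplication) applies. This works precisely because the hypothesis of the proposition is that the array is balanced at \emph{every} time $\id{op}$ makes a random choice — this is a deterministic conditioning event on the rest of the execution, not something that depends on $\id{op}$'s coins — so the occupancy bounds used at $\id{op}$'s $r$-th probe are valid no matter what happened at probes $1, \dots, r-1$. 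A secondary point to state cleanly is that the other occupants of a batch are distinct slots, so "occupied slots $\leq 16 n_i$" directly bounds the probability that a uniform random location in $B_i$ is taken. Everything else is the elementary arithmetic with the doubly-exponential sequence $2^{2^i}$ sketched above.
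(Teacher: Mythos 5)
Your argument is correct in structure and reaches the right conclusion, but it is organized differently from the paper's proof. The paper does not run an induction on $k$: for each $k\geq 2$ it simply upper-bounds $\Pr[\id{op}\text{ reaches }B_k]$ by the probability of losing all trials in $B_{k-1}$ alone, computing the per-trial failure probability as $(n/2^{2^{k-1}+1})/(n/2^{k})$ and discarding the factor $\Pr[\id{op}\text{ reaches }B_{k-1}]$ entirely; this one-step bound is what forces the paper to take $c_k\geq 16$. Your recurrence $q_{k+1}\leq q_k\cdot\bigl(2^{k-2^k}\bigr)^{c_k}$ keeps that factor and consequently gets away with $c_k\geq 2$ for $k\geq 1$, which is a genuine (if modest) quantitative improvement and arguably a cleaner way to see why the doubly-exponential decay propagates. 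Your handling of the conditioning issue --- that the occupancy bounds are part of the hypothesis and therefore hold regardless of $\id{op}$'s earlier coin flips, so the per-probe failure bounds multiply --- is exactly the point the paper relies on implicitly, and you state it more carefully than the paper does.

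There is one slip in your base case. You claim that balance gives ``at most $16n_0=n/2$ occupied slots'' in $B_0$, but $n_0=n$, so $16n_0=16n$ and the overcrowding condition is vacuous for batch $0$; the balance hypothesis gives you nothing there. The correct bound, and the one the paper uses, is that at most $n-1$ of the $3n/2$ slots of $B_0$ can be occupied because at most $n$ processes participate, so at least $n/2$ slots are free and each probe \emph{succeeds} with probability at least $1/3$ (not $2/3$). The per-probe failure bound is therefore $2/3$, and you need $(2/3)^{c_0}\leq 1/2^7$, i.e.\ $c_0\geq 12$ rather than your $c_0\geq 5$. Since the paper stipulates $c_0\geq 16$, your conclusion survives, but the justification and the stated constant need to be fixed.
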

\begin{proof}
	For $k = 1$, we 
upper bound the probability that the
operation does \emph{not} stop in batch $B_0$, i.e. fails all its trials in batch $B_0$. Consider the points in the
execution when the process performs
its trials in the batch $B_0$. At every such point, at most $n - 1$ locations in batch $B_0$ are occupied by other
processes, and at least $n / 2$ locations are always free. Therefore, the process always has probability at least $1 /
3$ of choosing an unoccupied slot in $B_0$ in each trial (recall that, since the adversary is oblivious, the scheduling
is independent of the random
choices). Conversely, the probability that the process fails all its $c_0$ trials in this batch is less than
$(2/3)^{c_0} \leq 1 / 2^7,$
for $c_0 \geq 16$, which implies the claim for $k = 1$. 

For batches $k \geq 2$, we consider the probability that the process fails all its trials in batch $k - 1$. Since, by
assumption, batch $B_{k - 1}$ is not overcrowded, there
are at
most $n / 2^{2^{k - 1} + 1}$ slots occupied in $B_{k - 1}$ while the process is performing random choices in this batch. On
the
other hand, $B_{k - 1}$ has $n / 2^{k}$ slots, by construction. Therefore, the probability that all of $p$'s trials fail
given that
the batch is not overcrowded is at most 
$$ \left( \frac{n / 2^{2^{k - 1} + 1}}{n / 2^{k}} \right)^{c_k} = \left(\frac{1}{2}\right)^{c_k (2^{k - 1} - k + 1)}.$$
The claim follows since $(1/2)^{c_k ( 2^{k - 1} - k + 1 )} \leq (1 / 2)^{2^{k} + 4} \leq \pi_k$ for $c_k \geq 16$ and $k
\geq 2$.
\end{proof}

We can use the fact that the adversary is oblivious to argue that the adversary cannot significantly 
increase the probability that a process holds a slot in a given batch. 
Due to space limitations, the full proof of the following lemma has been deferred to the Appendix. 
% 

%!TEX root = activityarray-full.tex

%\lemmaproofpostponed{timeinvariant}{
\begin{lemma}
\label{timeinvariant}
Suppose the array is fully balanced at all times $t < T$. Let $B(q,t)$ be a random variable whose value is equal to the batch in which process $q$ 
holds a slot at time $t$. Define $B(q,t)=-1$ if $q$ holds no slot at time $t$. Then for all $q,j,t < T$, $\Pr[B(q,t)=j] \leq c_j\pi_j$. 
\end{lemma}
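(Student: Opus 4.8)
The plan is to reduce the statement to a tail bound on the number of test-and-set trials of a single \lit{Get} operation of $q$, and then to sum that bound over all of $q$'s \lit{Get} operations using a geometric-decay argument; the summation is the delicate part, because $q$ may issue polynomially many \lit{Get} operations. Fix $q$, a batch index $j\ge 1$ (the case $j=0$ is trivial since $\pi_0=1$), and $t<T$. By well-formedness of $q$'s input, at any moment $q$ holds at most one slot, and it is the slot acquired by a uniquely determined \lit{Get} operation: listing $q$'s \lit{Get} operations in input order as $G_1,G_2,\dots$ with matching \lit{Free} operations $F_1,F_2,\dots$, call $G_m$ \emph{current at $t$} if $t$ lies between the linearization points of $G_m$ and $F_m$. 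The events $\{G_m\ \text{current at}\ t\}$ are pairwise disjoint, and $\{B(q,t)=j\}\subseteq\bigcup_m\big(\{G_m\ \text{current at}\ t\}\cap\{G_m\ \text{wins a slot in batch}\ j\}\big)$. Writing $W_m$ for the number of trials $G_m$ performs before winning and $\ell_j:=1+\sum_{i<j}c_i$, the event ``$G_m$ wins in batch $j$'' is exactly $\{\ell_j\le W_m\le \ell_j+c_j-1\}$.

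The key structural point is that ``$G_m$ current at $t$'' is equivalent to a constraint of the form $W_m\in[\underline w_m,\overline w_m]$ for integers $\underline w_m\le\overline w_m$ that depend only on the schedule and on the history before $G_m$. Indeed, between $G_m$ and $F_m$ the input of $q$ contains only \Collect{} and \Call{} operations, whose step-length is deterministic; so once the step at which $G_m$ begins is fixed (it depends only on $W_1,\dots,W_{m-1}$ and the schedule), the linearization time of $F_m$ is a nondecreasing, schedule-determined function of the step at which $G_m$ wins, hence of $W_m$. Intersecting with ``wins in batch $j$'' and noting the term is empty unless $\underline w_m\le \ell_j+c_j-1$, one gets
\[
\Pr\big[G_m\ \text{current at}\ t\ \text{and wins in batch}\ j\big]\ \le\ \Pr\big[W_m\ge\max(\ell_j,\underline w_m)\big].
\]
Since the array is balanced at all times $<T$, in particular whenever $G_m$ flips a coin, Proposition~\ref{prop:prob} gives $\Pr[W_m\ge\ell_j]=\Pr[G_m\ \text{reaches batch}\ j]\le\pi_j$; and each of the at most $c_j-1$ further trials that matter lies in batch $B_j$, which has $n/2^{j+1}$ slots of which fewer than $16n_j$ are occupied, so such a trial fails with probability at most $16n_j/(n/2^{j+1})\le 1/2$. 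Hence $\Pr[W_m\ge\ell_j+s]\le\pi_j\,2^{-s}$, and the displayed quantity is at most $\pi_j\,2^{-\max(0,\,\underline w_m-\ell_j)}$.

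It remains to sum over $m$. Consecutive \lit{Get} operations of $q$ are separated by at least $W_{m-1}+d_m+1\ge d_m+2$ steps, where $d_m$ is the number of $q$-steps strictly between $G_m$ and $F_m$; unwinding the definition of $\underline w_m$, this yields $\underline w_{m-1}\ge\underline w_m+d_m+2$ for every outcome. Combined with the requirement $\overline w_m\ge\ell_j$ needed for a nonempty term, one checks that among the contributing operations at most one has $\underline w_m\le\ell_j$ (and it is the last contributing one), while moving to earlier operations $\underline w_m-\ell_j$ grows by at least $2$ each step. Therefore $\sum_m 2^{-\max(0,\,\underline w_m-\ell_j)}\le 1+2^{-2}+2^{-4}+\cdots=\tfrac{4}{3}$ surely, and taking expectations gives $\Pr[B(q,t)=j]\le\tfrac{4}{3}\pi_j\le c_j\pi_j$.

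The main obstacle is exactly this last step: a union bound over $q$'s polynomially many \lit{Get} operations is hopeless, since each reaches batch $j$ with probability only $\pi_j$ but there can be far more than $1/\pi_j$ of them. What rescues the argument is that a long \Collect{} inserted by the adversary between a register and its deregister, while it widens the ``current'' window of $G_m$ (and hence weakens the constraint on $W_m$), simultaneously pushes $G_{m+1}$ correspondingly far away; so only $O(1)$ operations can be ``cheaply'' current, and the geometric decay of the tail of $W_m$ controls all of the others. Making this trade-off precise --- establishing the inequality $\underline w_{m-1}\ge\underline w_m+d_m+2$ and that at most one operation escapes it --- is the heart of the (appendix) proof.
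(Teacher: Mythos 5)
Your proof is correct in its essential structure, but it resolves the central difficulty by a genuinely different mechanism than the paper's. Both arguments must sum over $q$'s polynomially many \lit{Get} operations, each of which reaches batch $j$ with probability only $\pi_j$ but of which there can be far more than $1/\pi_j$. The paper decomposes $\{B(q,t)=j\}$ over pairs (completion step, operation) and over the $c_j$ possible offsets of the operation's start relative to its completion; its key observation is that for each fixed offset the start events $S(G,t_q[\alpha(\ell)+m-1])$ are mutually exclusive (the \Call{}s and the \Free{} separating a \lit{Get} from the next one push the next start past the relevant window), so each of the $c_j$ inner sums of start probabilities is at most $1$, yielding $c_j\pi_j$. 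You instead encode ``current at $t$'' as a window constraint $W_m\in[\underline w_m,\overline w_m]$ on the trial count, use the non-overcrowding of batch $j$ to get geometric decay $\Pr[W_m\ge\ell_j+s]\le\pi_j 2^{-s}$ of the tail \emph{within} the batch, and exploit the spacing $\underline w_{m-1}\ge\underline w_m+d_m+2$ to show that at most one operation pays the full $\pi_j$ while the rest decay geometrically, summing to $\tfrac43\pi_j$. Your route buys a sharper constant ($\tfrac43$ versus $c_j$) at the cost of using more structure: it needs the per-trial failure bound $16n_j/(n/2^{j+1})\le\tfrac12$, whereas the paper uses Proposition~\ref{prop:prob} purely as a black box. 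A few points in your write-up deserve tightening: (i) $\underline w_m$ and $\overline w_m$ are random (they depend on $W_1,\dots,W_{m-1}$ through $G_m$'s start index), so the per-term bound must be applied conditionally on the history before $G_m$ and only then summed and integrated; (ii) the assertion that the unique ``cheap'' operation is the last contributing one $M$ and that already $\underline w_{M-1}\ge\ell_j+2$ requires combining $\underline w_{M-1}\ge\underline w_M+d_M+2$ with $\underline w_M\ge\ell_j-d_M$ (which follows from the contributing condition $\overline w_M\ge\ell_j$); (iii) one should note that whenever a term is nonempty, all trials entering the tail bound occur at times $<T$, so the balancedness hypothesis genuinely covers them; and (iv) the phrase ``separated by at least $W_{m-1}+d_m+1$'' mixes indices --- the separation of start indices is $W_{m-1}+d_{m-1}+1+e_{m-1}$, though the recurrence you actually use is the correct one. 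None of these is a fundamental gap.
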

\begin{proof}
We note that read operations cannot affect the value $\Pr[B(q,t)=j]$. For the purposes of this lemma, since the \Collect{} operation always performs exactly $n$ read operations and no others, we can assume without loss of generality that no \Collect{} operations appear in the input. This assumption is justified by replacing each \Collect{} operation with $n$ \Call{} operations before conducting the analysis.

We now introduce some notation. Let $t_q[i]$ be the index of $q$'s $i^{th}$ step in $\sigma$, and let $G_k$ 
be the $k^{th}$ \Get{} operation performed by $q$. Let $R(G)$ be the batch containing the name returned by operation $G$. Let $S(G,t)$ be the event that the first step executed during \Get{} operation $G$ occurs at time $t$. Similarly, let $C(G,t)$ be the event that the last step executed during $G$ occurs at time $t$. Intuitively $S(G,t)$ and $C(G,t)$ respectively correspond to the events that $G$ \emph{starts} at time $t$ or \emph{completes} at time $t$.

Because the schedule is fixed in advance, we have that, for any
value $x$ and any index $t$ satisfying $\sigma_{t} \neq q$, $\Pr(B(q,t)=x) = \Pr(B(q,t - 1)=x)$ holds. 
Therefore, it suffices to consider only the time steps at which $q$ acts. Hence we fix 
a time $t\leq T$ such that $\sigma_t=q$, and define $\tau$ such that $t=t_q[\tau]$.

 Note that, when executing some \Get{} $G_k$, each process performs exactly $c_b$ memory operations in batch $b$ before moving on to the next batch. We write $\hat{c}_j=\sum_{b < j}c_b$ which represents the maximum number of memory operations which can be performed in batches less than $j$.
 In particular, if $S(G_k,t_q[i])$ holds, then, assuming $G_k$ is still running at these times, the operations performed at times 
 $t_q[i+\hat{c}_j]$ until $t_q[i+\hat{c}_j+c_j-1]$ are performed on locations in $B_j$. 
 Indeed, if $G_k$ returns a name in $B_j$, then $C(G_k,t_q[i+\hat{c}_j+m])$ must hold for some $0\leq m < c_j$.

Now, consider the event $B(q,t)=j$. 
This event implies that there exists some \Get{} operation $G_k$ for which $R(G_k)=j$ and $C(G_k,t_q[i])$ holds where $i\leq \tau$
and $G_k$ is followed by at least $\tau-i$ \Call{} steps in the input of $q$. These conditions characterize the last \Get{} performed by $q$ before time $t$.

For convenience, we will let the set $\mathcal{G}_x$ be the set of \Get{} operations which are followed by at least $x$ 
\Call{} steps in the input. Note that, because the last \Get{} performed by $q$ before time $t$ is unique, the events $C(G_k,t_q[\ell])$ 
are mutually exclusive for all possible $k,\ell$ with $G_k\in \mathcal{G}_{\tau-\ell}$. With this in mind, we can write
\begin{eqnarray*} 
\Pr(B(q,t)=j) = \\ \sum_{\ell \leq \tau} \sum_{G\in\mathcal{G}_{\tau-\ell}} \Pr(C(G,t_q[\ell])\cap R(G)=j). 
\end{eqnarray*}

For convenience, we write $\alpha(\ell)=\ell-(\hat{c}_j+c_j)$. By our earlier observation, we know that if $G_k$ completed at time $\ell$ and $R(G)=j$, it must be that $G$ started somewhere in the interval $[t_q[\alpha(\ell)], t_q[\alpha(\ell)+c_j-1]]$. Thus, we can further rewrite 
\begin{eqnarray*} \Pr(B(q,t)=j) =  & \\ \sum_{m=1}^{c_j} \sum_{\ell \leq \tau} \sum_{G\in\mathcal{G}_{\tau-\ell}} \Pr(S(G,t_q[ \alpha(\ell) + m - 1])  \cap C(G,t_q[\ell])). &\end{eqnarray*}
By the law of total probability, this expression is then equivalent to 
\begin{eqnarray*}
\Pr(B(q,t)=j)  =  \\ \sum_{m = 1}^{c_j} \sum_{\ell \leq \tau} \sum_{G\in\mathcal{G}_{\tau-\ell}} \Pr(S(G,t_q[\alpha(\ell)+ m - 1])) \cdot \\ \cdot \Pr(C(G,t_q[\ell]) | S(G, t_q[\alpha(\ell)+m-1])).
\end{eqnarray*}
Given $S(G,t_q[\alpha(\ell)+m-1)$ for $1\leq m\leq c_j$, it must be that $C(G,t_q[\ell])$ implies $R(G)=j$, since the last trial performed by $G$ necessarily occurred in batch $j$. 
Applying this observation together with Proposition~\ref{prop:prob}, we have  $$\Pr(C(G,t_q[\ell]) | S(G, t_q[\alpha(\ell)+m-1]))\leq\pi_j.$$ 
We are left with 
\begin{equation}\label{eq:cpij} \Pr(B(q,t)=j)\leq \\ \pi_j \sum_{m = 1}^{c_j} \sum_{\ell \leq \tau} \sum_{G\in\mathcal{G}_{\tau-\ell}} \Pr(S(G, t_q[\alpha(\ell)+m-1])).\end{equation}

We claim that these events, $\{S(G,t_q[\alpha(\ell)+m-1])\}_{G, \ell}$ are mutually exclusive for fixed $m$. Specifically, we claim that (for fixed $m$) there cannot be an execution in which there are two pairs $(G, \ell) \neq (G',\ell')$ for which  $\ell,\ell' \leq \tau$, ${G\in\mathcal{G}_{\tau-\ell}}, {G'\in\mathcal{G}_{\tau-\ell'}}$, and $S(G, t_q[\alpha(\ell)+m-1])$, $S(G', t_q[\alpha(\ell')+m-1])$ hold. 

Suppose for the sake of contradiction that two such pairs do exist. Without loss of generality assume $\ell < \ell'$. Then $G'$ must appear after $G$ in the input to $q$. Furthermore, by assumption $G$ is followed by at least $\tau-\ell$ \Call{} operations and at least one \Free{} operation. Thus, the earliest time at which $G'$ could begin executing is if $G$ completes in exactly one step. In this case, $G$, the $\tau-\ell$ \Call{} operations, and the single \Free{} operation together take at least $\tau-\ell+2$ steps, 
and so $G'$ cannot possibly start before time 
$t_q[(\ell-(\hat{c}_j+c_j)+ m -1) + (\tau-\ell+2)] = t_q[\tau-(\hat{c}_j+c_j)+m+1]$. By assumption, $\tau \geq \ell'$, thus $t_q[\tau-(\hat{c}_j+c_j)+m+1] > t_q[\ell'-(\hat{c}_j+c_j)+m-1] = t_q[\alpha(\ell')+m-1],$ contradicting the assumed start time of $G'$.

Thus, $\{S(G,t_q[\alpha(\ell)+m-1])\}$ are mutually exclusive, as claimed.
Applying this observation, we upper bound the inner two summations of equation~(\ref{eq:cpij}) by $1$, and so equation~(\ref{eq:cpij}) reduces to
$\Pr(B(q,t)=j) \leq \pi_j\sum_{m=1}^{c_j} 1 = c_j\pi_j,$
which completes the proof.
\end{proof}

\begin{proof}[Proof sketch] 
Fix a time $t$ and a process $q$. We first argue that, since the adversary is oblivious, the schedule must be fixed in advance and it suffices to consider only the times at which $q$ takes steps, i.e. times $t'$ where $\sigma_{t'} = q$. We denote $t_q[i]$ to be the $i^{th}$ step taken by $q$ in $\sigma$ and fix $\tau$ to be the index for which	 $t=t_q[\tau]$.

For \Get{} operation $G$ and time $t$, we define a random variable $S(G,t)$ to be the indicator variable for the event that the first step performed during $G$ occurs at time $t$. We also define the set $\mathcal{G}_x$ to be the set of all \Get{} operations which are followed by at least $x$ \Call{} steps in the input to $q$. Finally, we define a value $\alpha(\ell)$ such that $t_q[\alpha(\ell)]$ is the earliest time $t$ at which $S(G,t)$ may hold given that $G$ finishes executing at time $t_q[\ell]$ and returns a name in batch $j$. An exact expression for $\alpha(\ell)$ is given in the full proof.

Intuitively, in order for process $q$ to hold a name in batch $j$ at time $t$, there must exist a \Get{} operation, $G$ and a time $\ell\leq \tau$ such that $G$ completed at time $t_q[\ell]$ and $G$ is in the set $\mathcal{G}_{\tau-\ell}$, implying that the name returned by $G$ is not freed before time $t$. 
Using this idea, we prove that the probability of this occurring can be bounded above by the quantity:
\begin{eqnarray*} 
 \pi_j \sum_{m = 1}^{c_j} \sum_{\ell \leq \tau} \sum_{G\in\mathcal{G}_{\tau-\ell}} \Pr(S(G, t_q[\alpha(\ell)+m-1])).
 \end{eqnarray*}

Finally, we proved that the events $\{S(G,t_q[\alpha(\ell)+m-1])\}_{\ell,G}$ are mutually exclusive for fixed $m$, which reduces the above inequality to $$\Pr(B(q,t)=j)\leq \pi_j \sum_{m=1}^{c_j} 1 = \pi_jc_j,$$ completing the proof. 
\end{proof}

We can now use the fact that operations performed on balanced arrays are regular to show that balanced arrays are unlikely
to become unbalanced. In brief, we use Lemmas~\ref{lem:chernoff} and~\ref{timeinvariant} to obtain concentration bounds for the number of processes that may 
occupy slots in a batch $j$. This will show that any batch is unlikely to be overcrowded.

%\propproofpostponed{chernoffapp}{
\begin{prop}
\label{chernoffapp}
Let $Q$ be the set of all processes. If, for all $q\in Q$ and some time $T$, the array was fully balanced at all times $t < T$, then for each $0
\leq j \leq \log \log n - 2$, batch $j$ is overcrowded at time $T$ with probability at most $(1/2)^{ \beta \sqrt{n} }$, where $\beta < 1$ is a constant.
\end{prop}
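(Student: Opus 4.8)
The plan is to bound the number of distinct slots occupied in batch $j$ at time $T$ by applying the generalized Chernoff bound (Lemma~\ref{lem:chernoff}) to the indicator variables $X_q = \mathbf{1}[B(q,T) = j]$ for $q \in Q$. By Lemma~\ref{timeinvariant}, since the array is fully balanced at all times $t < T$, each $X_q$ satisfies $\Pr[X_q = 1] \le c_j \pi_j =: p_j$. So the expected number of occupied slots in batch $j$ is at most $n p_j = c_j \pi_j n = c_j n_j$ for $j \ge 1$ (and at most $c_0 n$ for $j = 0$, though the interesting batches are $j \ge 1$). The goal is to show the actual count exceeds $16 n_j$ with only probability $(1/2)^{\beta\sqrt n}$. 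For $j \ge 1$ we have $n_j = n/2^{2^j+5}$, so $16 n_j = n / 2^{2^j+1}$ and we want to set $\delta$ so that $(1+\delta) n p_j = 16 n_j$, i.e. $1+\delta = 16/c_j$; since $c_j$ is a constant this is a constant multiplicative slack above the mean (needs $c_j < 16$, which is compatible with the requirement $c_j \ge 16$ only in the limiting sense — more carefully, one takes the threshold $16 n_j$ versus mean $c_j n_j$ and the ratio must exceed $1$, so the argument is cleanest when the $c_j$ used for concentration is, say, at most $8$; alternatively one rescales the definition of overcrowded — I would simply track that $16 n_j$ is a constant factor above the mean $c_j n_j$, giving a constant $\delta > 0$). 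Then Lemma~\ref{lem:chernoff} yields a failure probability at most $\left(e^\delta/(1+\delta)^{1+\delta}\right)^{n p_j}$, and since $n p_j = c_j n_j = c_j n / 2^{2^j+5}$, for $j \le \log\log n - 2$ we have $2^j \le \tfrac12 \log n$, hence $2^{2^j} \le \sqrt n$, so $n p_j \ge c_j n / (32\sqrt n) = \Omega(\sqrt n)$. With $\delta$ constant, the base $e^\delta/(1+\delta)^{1+\delta}$ is a constant strictly less than $1$, so the whole bound is at most $(1/2)^{\beta\sqrt n}$ for an appropriate constant $\beta < 1$.

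The steps, in order, are: (1) fix the batch index $j \le \log\log n - 2$ and define $X_q = \mathbf{1}[B(q,T)=j]$; (2) invoke Lemma~\ref{timeinvariant} to get the marginal bound $\Pr[X_q=1] \le c_j \pi_j$; (3) verify the subset hypothesis of Lemma~\ref{lem:chernoff}, namely $\Pr[\wedge_{q \in S} X_q] \le (c_j\pi_j)^{|S|}$ for every $S \subseteq Q$; (4) choose $\delta$ so that $(1+\delta)$ times the mean equals the overcrowding threshold $16 n_j$, and check $\delta$ is a positive constant; (5) plug into Lemma~\ref{lem:chernoff}, use $np_j = \Omega(\sqrt n)$ for $j \le \log\log n - 2$, and conclude the bound $(1/2)^{\beta\sqrt n}$. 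The count of \emph{distinct occupied slots} is at most $\sum_q X_q$ (each process holds at most one slot), so overcrowding of batch $j$ is implied by $\sum_q X_q \ge 16 n_j$, and step (5) bounds exactly this.

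The main obstacle is step (3): verifying that the events $\{B(q,T) = j\}$ satisfy the negative-correlation-style hypothesis $\Pr[\wedge_{q\in S} B(q,T)=j] \le \prod_{q \in S} c_q \pi_q$ required by the generalized Chernoff bound. This does not follow formally from the single-variable bound of Lemma~\ref{timeinvariant} alone; one must re-run (or strengthen) the argument of Lemma~\ref{timeinvariant} conditioned on a set of processes simultaneously holding slots in batch $j$. The key point is again that the adversary is oblivious — the schedule is fixed in advance — so that for distinct processes $q, q' \in S$, the relevant events are determined by disjoint sets of coin flips (each process's own randomness), and hence the conditional probability that $q$ holds a slot in $B_j$, given the behaviour of all other processes in $S$, is still at most $c_j\pi_j$ by the same mutual-exclusivity-of-start-times argument used in Lemma~\ref{timeinvariant}. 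I would state this as a conditional version of Lemma~\ref{timeinvariant} and chain it over the elements of $S$. Once the hypothesis of Lemma~\ref{lem:chernoff} is in place, the remaining calculation is routine: it is just tracking that $2^{2^j} \le \sqrt n$ in the range $j \le \log\log n - 2$ and that a constant-factor Chernoff tail over $\Omega(\sqrt n)$ trials gives an exponentially small bound of the claimed form.
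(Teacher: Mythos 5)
Your proposal follows essentially the same route as the paper's proof: indicator variables $X_j^q$ for each process holding a slot in batch $j$ at time $T$, the marginal bound $c_j\pi_j$ from Lemma~\ref{timeinvariant}, a chained conditional version of that bound (using obliviousness of the adversary) to establish the subset hypothesis $\Pr\left(\wedge_{q\in S}X_j^q\right)\leq (c_j\pi_j)^{|S|}$ of Lemma~\ref{lem:chernoff} --- which is exactly the correlation issue the paper singles out and resolves the same way --- and finally the observation that the mean is $\Omega(\sqrt n)$ for $j\leq \log\log n-2$, giving the $(1/2)^{\beta\sqrt n}$ tail. Your side remark that the overcrowding threshold $16n_j$ sits at or below the mean bound $c_jn_j$ when $c_j\geq 16$ (so that $\delta>0$ requires either smaller $c_j$ or a rescaled threshold) is a fair catch of a constant-factor slack that the paper's own proof also elides, since it asserts the Chernoff conclusion with $\beta<1/(45\cdot 2^5)$ without exhibiting $\delta$; this affects bookkeeping of constants, not the structure of the argument.
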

\begin{proof}
Let $\Pr_t^q(j)$ be the probability that process $q$ holds a slot in $B_j$ at some time $t$.
Applying Lemma~\ref{timeinvariant}, we have $\Pr_T^q(j) \leq c_j \pi_j$ for every $q\in Q$. 
For each $q \in Q$, let $X_j^q$ be the binary random variable with value $1$ if $q$ is in batch $j$ at time $T$,
and $0$ otherwise. The expectation of $X_j^q$ is at most $c_j \pi_j = c_j / 2^{2^{j} + 5}$.
Let the random
variable $X_j$ count the number of processes in batch $j$ at $T$. Clearly $X_j = \sum_{q} X_j^q$. 
By linearity of expectation, the expected value of $X_j$ is at most $c_j n_j = c_j n / 2^{2^{j} + 5}$. 
Next, we obtain a concentration bound for $X_{j}$ using Lemma~\ref{lem:chernoff}. 

It is important to note that the variables $X_{j}^q$ are not independent, and may be positively correlated in
general. For example, the fact that some process has reached a late batch (an improbable event), could imply that the
array is in a state that allows such an event, and therefore such an event may be more likely to happen again in that state. 

To circumvent this issue, we notice that, given the assumption that the array is fully balanced, and therefore balanced up to $j$, the probability
that any particular process holds a slot in $j$ must still be bounded above by $c_j \pi_{j}$, by Proposition~\ref{timeinvariant}. 
This holds given \emph{any} values of the random variables $\{X_j^q\}$ which are consistent with the array being balanced up to $j$. 
Formally, for any $R \subseteq Q$ with $q \notin R$, 
$\Pr \left( X_{j+1}^q | \wedge_{r \in R} X_{j+1}^r  \right)  \leq c_j \pi_{j+1}.$

\noindent In particular, for any $S=\{s_1,\dots,s_k\} \subseteq Q$ we have that 
\begin{eqnarray*}
\Pr \left( \wedge_{i \in S} X_{j+1}^i  \right) = \Pr\left( X_{j+1}^{s_1} \right) \cdot \Pr\left( X_{j+1}^{s_2} |
X_{j+1}^{s_1} \right) \cdot \ldots \\ \cdot \Pr\left(
X_{j+1}^{s_k} |
X_{j+1}^{s_1}, 
\ldots, X_{j+1}^{s_{k-1}}\right) 
\leq (c_j \pi_{j+1})^{|S|}.
\end{eqnarray*}

\noindent We can therefore apply Lemma~\ref{lem:chernoff} to obtain that
$\Pr \left( X_{j+1} \geq n / 2^{2^{j+1} + 1} \right)  
\leq \left( 1/2 \right)^{\beta \sqrt n}, $
 for $\beta < 1 / (45 \cdot 2^5)$, where we have used that $j \leq \log \log n - 1$ for the last inequality. This is the desired
bound.
\end{proof}

% \begin{prop}
% \label{prop:chernoff-appl}
% Let $Q$ be the set of all processes. If, for all $q\in Q$ and some time $T$, the array was fully balanced at all times $t\leq T$, then for each $0
% \leq j \leq \log \log n - 2$, batch $j$ is overcrowded at time $T$ with probability at most $(1/2)^{ \beta \sqrt{n} }$, where $\beta < 1$ is a constant.
% \end{prop}
% 
% \begin{proof}
% \end{proof}

\noindent Next, we bound the probability that the array ever becomes unbalanced during the polynomial-length execution. 

\begin{prop}
\label{prop:overcrowded}
Let $t_i$ be the time step at which the $i^{th}$ \Get{} operation is linearized. For any $x\in \mathbb{N}$, the array is  fully balanced at every time $t$ in the interval $[0, t_x]$ with probability at least 
$1 - O( x \log\log n/ 2^{\beta \sqrt{n}} )$, where $\beta < 1$ is a constant.  
\end{prop}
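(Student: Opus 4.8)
The plan is to prove this by induction on $x$ via a union bound over all the ``bad events'' that could cause the array to become unbalanced. Define the event $\mathcal{B}_x$ to be ``the array is fully balanced at every time $t \in [0, t_x]$.'' The base case $x = 0$ is trivial, since the array starts empty and hence fully balanced. For the inductive step, I would assume $\mathcal{B}_{x-1}$ holds with the claimed probability, and analyze what can go wrong in the window $(t_{x-1}, t_x]$. The key observation is that the occupancy of any batch can only \emph{increase} when a \Get{} operation is linearized (a \Free{} only decreases occupancy, and \Call{}/\Collect{} steps do not touch the array), so the only new opportunity for a batch to become overcrowded is at time $t_x$ itself. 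Thus, conditioned on the array being fully balanced at all times $t < t_x$, Proposition~\ref{chernoffapp} applies with $T = t_x$, and tells us that each individual batch $j \in \{0, \ldots, \log\log n - 2\}$ is overcrowded at time $t_x$ with probability at most $(1/2)^{\beta\sqrt{n}}$.

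The second step is to assemble these per-batch, per-time-step bounds into the global statement. Taking a union bound over the $\log\log n - 1$ batches at the single new linearization time $t_x$ gives failure probability $O(\log\log n / 2^{\beta\sqrt n})$ for that step, conditioned on $\mathcal{B}_{x-1}$. Combining with the inductive hypothesis, $\Pr[\neg\mathcal{B}_x] \leq \Pr[\neg\mathcal{B}_{x-1}] + O(\log\log n / 2^{\beta\sqrt n}) = O((x-1)\log\log n / 2^{\beta\sqrt n}) + O(\log\log n / 2^{\beta\sqrt n}) = O(x\log\log n / 2^{\beta\sqrt n})$, which is exactly the claimed bound. One should also note that batches beyond $\log\log n - 2$ need not be considered separately: if the array is balanced up to batch $\log\log n - 2$, then by the regularity argument essentially no process reaches those later batches, so they are balanced automatically (or one simply absorbs the negligible contribution into the $O(\cdot)$); alternatively the statement of ``fully balanced'' can be read as balanced up to $\log\log n - 1$ and Proposition~\ref{chernoffapp}'s range of $j$ suffices with the convention that the last batch is handled by the same estimate.

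The subtle point — and the main thing to get right — is the conditioning structure. Proposition~\ref{chernoffapp} requires as a hypothesis that the array was fully balanced at \emph{all} times $t < T$; it does not directly give an unconditional bound. So the union bound must be set up carefully: we are bounding $\Pr[\neg\mathcal{B}_x]$ by $\Pr[\neg\mathcal{B}_{x-1}] + \Pr[\mathcal{B}_{x-1} \wedge \text{some batch overcrowded at } t_x]$, and only the second term invokes Proposition~\ref{chernoffapp} (which is legitimate precisely because we have intersected with $\mathcal{B}_{x-1}$, supplying its hypothesis). I expect this bookkeeping — making sure the event whose probability we bound always carries the ``balanced so far'' conditioning that Proposition~\ref{chernoffapp} needs — to be the only real obstacle; the rest is a routine telescoping union bound. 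A minor additional check is that between consecutive linearization times $t_{x-1}$ and $t_x$ the array genuinely cannot become overcrowded, which follows from the monotonicity observation above (occupancy changes only at \Get{} and \Free{} linearization points, and the former are exactly the $t_i$'s while the latter only help).
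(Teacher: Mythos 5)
Your proposal is correct and follows essentially the same route as the paper: induction on the index of the linearized \Get{} operation, the law-of-total-probability decomposition $\Pr[\neg\mathcal{B}_x] \leq \Pr[\neg\mathcal{B}_{x-1}] + \Pr[\text{new failure at } t_x \mid \mathcal{B}_{x-1}]$, an application of Proposition~\ref{chernoffapp} to the new linearization time, and a union bound over the $\log\log n$ batches. Your explicit monotonicity observation (that occupancy only increases at \Get{} linearization points, so balance at all times through $t_{x-1}$ extends to all $t < t_x$, supplying the exact hypothesis of Proposition~\ref{chernoffapp}) is a detail the paper leaves implicit, and it is handled correctly here.
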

\begin{proof}
	Notice that it is enough to consider times $t_i$ with $i = 1 \ldots x$, since \lit{Free} or
\lit{Collect} operations do not influence the claim.  We proceed by induction on the index $i$ of the \lit{Get}
operation in the linearization order. We prove that, for every $i = 1\ldots x$, the probability that there exists $t \leq t_i$ for which the
array is not fully balanced is at most $i / 2^{\beta \sqrt{n}}$.
	 
	For $i = 0$, the claim is straightforward, since the first operation gets a slot in batch $B_0$, so the array is
fully balanced at $t_1$ with probability $1$. 
	For $i \geq 1$, let $E_i$ be the event that, for some $\tau\leq t_i$, the array is not fully balanced at time $\tau$. From
the law of total probability we have that: $\Pr( E_i ) \leq \Pr( E_{i - 1} ) + \Pr( E_i | \neg E_{i - 1} ). $
	
	By the induction step, we have that $\Pr ( E_{i - 1} ) \leq (i - 1) \log\log n / 2^{\beta \sqrt n }$. We therefore need to
upper bound the term $\Pr( E_i | \neg E_{i - 1} )$, i.e. the probability that the data structure is not fully balanced
at time $t_{i}$ given that it was balanced at all times up to and including $t_{i - 1}$.
Proposition~\ref{chernoffapp} 
bounds the probability that a single batch is overcrowded at time $t_{i}$ by $(1/2)^{\beta \sqrt{n}}$. Applying the
union bound over the $\log\log n$ batches
gives $\Pr( E_i | \neg E_{i-1}) \leq \log\log n / 2^{\beta \sqrt{n} }.$ Thus, by induction 
$\Pr( E_i ) \leq \Pr( E_{i-1} ) + \Pr( E_i | \neg E_{i-1} ) \leq \frac{i\log\log n}{2^{\beta \sqrt{n}}},$
which proves
Proposition~\ref{prop:overcrowded}.
\end{proof}

\noindent Proposition~\ref{prop:overcrowded} has the following corollary for polynomial executions.

\begin{corollary}
 The array is balanced for the entirety of any execution of length $n^\alpha$ with probability at least 
$1 - O( n^\alpha \log\log n/ 2^{\beta \sqrt{n}} )$. 
\end{corollary}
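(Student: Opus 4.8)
The plan is to obtain this directly from Proposition~\ref{prop:overcrowded} by instantiating the free parameter $x$ appropriately. The key observation is that every \lit{Get} operation takes at least one step, so an execution of length $n^\alpha$ contains at most $n^\alpha$ \lit{Get} operations. Letting $t_x$ be the linearization time of the last \lit{Get} in the execution, where $x \leq n^\alpha$, Proposition~\ref{prop:overcrowded} gives that the array is fully balanced at every time in $[0, t_x]$ with probability at least $1 - O(x \log\log n / 2^{\beta\sqrt n}) \geq 1 - O(n^\alpha \log\log n / 2^{\beta\sqrt n})$, using that the error term is monotone nondecreasing in $x$.

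It then remains to argue that balance at all times up to $t_x$ implies balance for the remainder of the execution, i.e. at times $t > t_x$. This is immediate from the structure of the algorithm: after $t_x$ no further \lit{Get} is linearized, hence no new slot is ever occupied; the only operations that can still take steps are \lit{Free}, \lit{Collect}, and \lit{Call}, none of which increases the number of occupied slots in any batch (a \lit{Free} resets a slot, while \lit{Collect} performs only reads and \lit{Call} touches no memory at all). Thus the set of occupied slots in each batch is nonincreasing after $t_x$, so if no batch was overcrowded at any time in $[0,t_x]$, none becomes overcrowded later. Combining the two observations yields the stated bound.

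There is essentially no obstacle here: the corollary is just Proposition~\ref{prop:overcrowded} packaged for the concrete case $x = n^\alpha$. The only point requiring (minor) care is the monotonicity of batch occupancy once the last \lit{Get} has completed; this could equivalently be folded into the induction of Proposition~\ref{prop:overcrowded}, whose proof already notes that \lit{Free} and \lit{Collect} steps do not influence the balance event, so that restricting attention to the linearization times $t_i$ of \lit{Get} operations loses nothing.
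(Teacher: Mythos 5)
Your proof is correct and matches the paper's intent: the paper states this corollary without proof as an immediate consequence of Proposition~\ref{prop:overcrowded} applied with $x = O(n^\alpha)$, which is exactly your instantiation. Your extra observation that batch occupancy is nonincreasing after the last \lit{Get} linearizes (so balance persists past $t_x$) is a small but valid point of care that the paper leaves implicit in its remark that \lit{Free} and \lit{Collect} operations do not influence the claim.
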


\paragraph{The Stopping Argument.} The previous claim  shows that, during a polynomial-length execution,
we can practically assume that no batch is overcrowded. 
On the other hand, Proposition~\ref{prop:prob} gives an upper bound on the distribution over batches during such an
execution, given that no batch is overcrowded.

 To finish the proof of Theorem~\ref{thm:poly}, we combine the previous claims to lower bound the
probability that every
operation in an execution of length
$n^\alpha$ takes $O(\log \log n)$ steps by
$1 - 1 / n^\gamma$, with $\gamma \geq 1$ constant. Consider an arbitrary operation $\id{op}$ by process $p$ in such an
execution prefix. In order to take $\omega( \log \log n )$ steps, the operation must necessarily move past batch $\log
\log n - 1$ (since each process performs $c$ operations in each batch). We first bound the probability that
this event occurs assuming that no batch is overcrowded during the execution. 

Let $\ell = \log\log n -1$. By the assumption that no batch is overcrowded, we have in particular that there are at most
$16 n_\ell = n/2^{2^\ell +1} = \sqrt{n}/2$ processes currently holding names in $B_\ell$. Given that there are
less than $\sqrt n$ names occupied in batch $B_\ell$ at every time when $p$ makes a choice, the
probability that $p$ makes $c_\ell$ unsuccessful probes in $B_\ell$ is at most 
$ \left( \frac{\sqrt{n}}{n/2^{\ell+1}} \right)^{c_\ell} = \left( \frac{\log n}{\sqrt{n}} \right)^{c_\ell}.$

Therefore, by the union bound together with the law of total probability, the probability that \emph{any one} of the $n^{\alpha}$ operations in the execution
takes $\omega( \log \log n )$ steps is at most 
$n^{\alpha}\left( \left(\frac{\log n}{\sqrt n}\right)^{c_\ell} +
\frac{\log\log n}{2^{\beta \sqrt{n}}}  \right) \leq
\frac{1}{n^{\gamma}}, $
for $c_\ell \geq 2(\alpha + \gamma + 1)$, $\beta < 1$ constant, and large $n$. 
(Note that the number of probes $c_\ell$ is large enough to meet the requirements of Proposition~\ref{prop:prob}.) This concludes the proof of the high probability claim. The
expected step complexity claim follows from Proposition~\ref{prop:prob}.

\iffalse
Let $\ell = \log \log n - 1$. Recall that the process must perform $c_\ell$ unsuccessful trials in batch $B_\ell$ in order to move past this batch.
The
probability that the process succeeds in a trial in $B_\ell$ is determined by the occupancy of batch $B_\ell$ at the time $t$
when the random choice is made. Let $X_\ell$ be the random variable counting the number of processes in the batch at
time $t$. Since, by assumption, no batch is overcrowded, we apply Lemma~\ref{lem:time} and Proposition~\ref{prop:prob},
to obtain that the expected value of $X_\ell$ is at most $n /
(2^{2^\ell} + 4) = \sqrt n / 16$. Using the same initial assumption, we can apply Lemma~\ref{lem:chernoff} to get
\begin{eqnarray*}
	\Pr\left(X_\ell \geq \sqrt{n} \right) \leq (1/2)^{\sqrt n / 16 }. 
\end{eqnarray*}

By the union bound, the probability that there are more than $\sqrt n$ names occupied in batch $B_\ell$ at \emph{any} of
the $c_\ell$ times when process $p$ makes a choice in $B_\ell$ is at most $c_\ell (1/2)^{\sqrt n / 16}$.
Assuming that there are less than $\sqrt n$ names occupied in batch $B_\ell$ at time when $p$ makes a choice, the
probability that $p$ makes $c_\ell$ unsuccessful choices is at most 
$ \left( {1/\sqrt n} \right)^{c_\ell}.$
\fi

\paragraph{Notes on the Argument.}
Notice that we can re-state the proof of Theorem~\ref{thm:poly} in terms of the step complexity of a single
\Get{} operation performing trials at times at which the array is fully balanced. 
\begin{corollary}
\label{cor:poly}
Consider a \Get{} operation with the property that, for any $0 \leq j \leq \log \log n - 1$,  the array is balanced up
to $j$ at all times when the operation performs trials in batch $j$. Then the step complexity of the operation is $O(
\log \log n )$ with probability at least $1 - O( 1/ n^{\gamma})$ with $\gamma \geq 1$ constant, and its expected step
complexity is constant. 
\end{corollary}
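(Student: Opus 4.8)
The plan is to replay, for this single operation, the ``stopping argument'' that closes the proof of Theorem~\ref{thm:poly}. The key observation is that the hypothesis of Corollary~\ref{cor:poly}---that the array is balanced up to $j$ at all times the operation probes batch $j$---hands us directly the ``no overcrowded batch'' fact that, in Theorem~\ref{thm:poly}, had to be earned globally through the concentration bound of Proposition~\ref{chernoffapp} and the union bound of Proposition~\ref{prop:overcrowded}. Consequently none of that machinery reappears here: what remains is essentially Proposition~\ref{prop:prob} together with one extra estimate for the last batch.

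First I would establish that the operation is \emph{fully regular} in the sense of Definition~\ref{def:regular}, i.e.\ for every $0 \le k \le \log\log n - 1$ it reaches batch $B_k$ with probability at most $\pi_k$. This is an induction on $k$ that mirrors Proposition~\ref{prop:prob} almost verbatim, except that ``$B_{k-1}$ is not overcrowded while $\id{op}$ probes it'' is now supplied directly by the hypothesis rather than by a balancedness assumption on the entire history: for $k = 1$ one uses that $B_0$ always keeps at least $n/2$ free slots, so each of the $c_0$ trials misses with probability $\le 2/3$ and all of them miss---hence $\id{op}$ reaches $B_1$---with probability $\le (2/3)^{c_0} \le \pi_1$ for $c_0 \ge 16$; for $k \ge 2$, at every time $\id{op}$ probes $B_{k-1}$ the array is balanced up to $k-1$, so $B_{k-1}$ has at most $16 n_{k-1}$ occupied slots among its $n/2^k$ slots, whence each of the $c_{k-1}$ trials misses with probability at most the density ratio and all of them miss with probability at most $\pi_k$, provided the batch constants are $\ge 16$ as in Proposition~\ref{prop:prob}. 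I stress that only balancedness \emph{up to the batch currently being probed} is ever invoked, which is precisely what the corollary grants, and that obliviousness of the adversary is what keeps the occupancy of each batch at the relevant probe times independent of this operation's coin flips.

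The expected-complexity claim then follows at once. The number of probes $\id{op}$ performs is at most $\sum_{k=0}^{\log\log n - 1} c_k\,\mathbf{1}[\id{op}\text{ reaches }B_k]$, plus the (very rarely incurred) cost of scanning the later batches and the backup array, which is $O(n)$ in the worst case. Taking expectations and using the regularity bound of the previous paragraph together with linearity of expectation, the expectation is at most $\sum_{k} c_k\pi_k + n^{-\gamma}\cdot O(n) \le c \sum_{k \ge 0} \pi_k + O(n^{1-\gamma}) = O(1)$, since $\sum_{k \ge 1} 2^{-(2^k+5)}$ converges and $\gamma \ge 1$.

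For the high-probability claim, note that $\id{op}$ performs $\omega(\log\log n)$ steps only if it moves past batch $\ell := \log\log n - 1$, which forces it both to reach $B_\ell$ and to miss all $c_\ell$ trials there. Given that it reaches $B_\ell$, the hypothesis says the array is balanced up to $\ell$ during those trials, so $B_\ell$ holds at most $16 n_\ell = \sqrt n / 2$ names among its $n/2^\ell = 2n/\log n$ slots; hence each trial misses with probability at most $(\log n)/(4\sqrt n)$ and all $c_\ell$ miss with probability at most $(\log n / \sqrt n)^{c_\ell}$. Therefore $\Pr[\id{op}\text{ takes }\omega(\log\log n)\text{ steps}] \le (\log n/\sqrt n)^{c_\ell} \le 1/n^\gamma$ for large $n$ as soon as $c_\ell \ge 2(\gamma + 1)$. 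The only delicate point---and the one I would flag as the ``main obstacle,'' though it is really just bookkeeping---is the choice of constants: the $c_j$ must be taken large enough to meet simultaneously the $c_j \ge 16$ requirement inherited from Proposition~\ref{prop:prob} and the $c_\ell \ge 2(\gamma+1)$ requirement above for the target exponent $\gamma \ge 1$. Beyond this there is no genuine difficulty, since Corollary~\ref{cor:poly} is precisely the statement one obtains once the global balancedness analysis of Section~\ref{sec:poly} has been factored out.
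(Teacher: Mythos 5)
Your proposal is correct and follows essentially the same route as the paper, which proves this corollary simply by re-running the stopping argument of Theorem~\ref{thm:poly} for a single operation: Proposition~\ref{prop:prob} gives regularity (and hence constant expected cost) under the local balancedness hypothesis, and the $(\log n/\sqrt n)^{c_\ell}$ bound on missing all trials in batch $\log\log n - 1$ gives the high-probability claim without any union bound over operations. The only nit is a harmless constant slip: batch $B_\ell$ has $n/2^{\ell+1} = n/\log n$ slots (not $n/2^{\ell}$), which only improves your per-trial miss probability.
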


This raises an interesting question: 
at what point in the execution might \lit{Get} operations start taking $\omega(\log\log n)$ steps with probability
$\omega(1/n)$? Although we will provide a more satisfying answer to this question in the following section, the
arguments up to this point grant some preliminary insight.
Examining the proof, notice that a necessary condition for operations to exceed $O(\log\log n)$ worst case
complexity is that the array becomes unbalanced. By Proposition~\ref{prop:overcrowded} the probability that the array
becomes unbalanced at or before time $T$ is bounded by $O(T/2^{\sqrt{n}})$ (up to logarithmic terms). Therefore
operations cannot have $\omega(\log\log n)$ with non-negligible probability until $T=\Omega(
2^{\sqrt n})$.

\subsection{Infinite Executions}
\label{sec:infty-analysis}

In the previous section, we have shown that the data structure ensures low step complexity in polynomial-length
executions. However, this argument does not prevent the data structure from reaching a bad state over
infinite-length executions. In fact, the adversary could in theory run the data structure until batches $B_1, B_2,
\ldots$ are overcrowded, and then ask a single process to \lit{Free} and \lit{Get} from this state
infinitely many times. The \emph{expected} step complexity of operations from this state is still constant, however the
expected \emph{worst-case} complexity would be logarithmic.
This line of reasoning motivates us to analyze the complexity of the data structure in infinite executions. Our analysis
makes the assumption that a thread releases a slot within polynomially many steps from the time when it acquired it.

\begin{definition} Given an infinite asynchronous schedule $\sigma$, we say that $\sigma$ is \emph{compact} if there
exists a constant $B \geq 0$ such that, for every time $t$ in $\sigma$ at which some process initiates a \lit{Get} method call, that
process executes a \lit{Free} method call at some time $t' < t+n^B$. 
\end{definition}

\noindent Our main claim is the following. 
\begin{theorem} \label{thm:infty} Given a compact schedule, every \Get{} operation on the \lit{LevelArray} will complete
in $O(\log\log n)$ steps with high probability.\end{theorem}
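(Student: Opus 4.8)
The plan is to reduce Theorem~\ref{thm:infty} to a single \emph{self-healing} claim and then invoke the machinery already developed in Section~\ref{sec:poly}. First I would observe that it suffices to prove that, for \emph{any fixed} time $t$, the array is fully balanced at time $t$ with probability at least $1-1/n^{\gamma'}$, where $\gamma'$ is a constant we may take as large as we like by enlarging the probe counts $c_j$. Indeed, fix a \Get{} operation $\mathrm{op}$. Since the adversary is oblivious the schedule $\sigma$ is fixed in advance, so the (at most $c\log\log n$) steps at which $\mathrm{op}$ probes batches $0,\dots,\log\log n-1$ occur at fixed positions of $\sigma$; a union bound then shows that the array is balanced at \emph{all} of those times with probability $1-(c\log\log n)/n^{\gamma'}$, and on that event Corollary~\ref{cor:poly} guarantees that $\mathrm{op}$ terminates within $O(\log\log n)$ steps with probability $1-O(1/n^{\gamma})$. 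Combining the two bounds gives the theorem.

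To obtain ``fully balanced at a fixed $t$ with high probability'' I would prove a healing lemma: for any time $t_0$ and any history of the execution up to $t_0$, after $W=\Theta(n^{B}\log\log n)$ further system steps the array is fully balanced at time $t_0+W$ with probability at least $1-\mathrm{poly}(n)\cdot 2^{-\Omega(\sqrt n)}$ (for $t<W$ the execution prefix is polynomial and the conclusion already follows from Proposition~\ref{prop:overcrowded}). Applying this with $t_0=t-W$ yields the reduction's hypothesis, since $\mathrm{poly}(n)\cdot 2^{-\Omega(\sqrt n)}\le 1/n^{\gamma'}$ for large $n$. I would prove the healing lemma by induction on the batch index $j$, carrying the event $\mathcal E_j$ = ``none of the batches $0,\dots,j$ is overcrowded at any time in $[t_0+jn^{B},\,t_0+W]$''. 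The base case $j=0$ is immediate: $B_0$ has $3n/2$ slots and at most $n$ processes are ever active, so $B_0$ is never overcrowded.

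For the inductive step the compactness hypothesis is the essential lever. By definition of a compact schedule, every slot occupied at a time $s\ge t_0+jn^{B}$ was acquired by a \Get{} operation that \emph{began} after $s-n^{B}\ge t_0+(j-1)n^{B}$, so only \Get{} operations from a window of polynomial length are relevant to batch $j$ at time $s$. Conditioning on $\mathcal E_{j-1}$ --- which holds with probability $1-\mathrm{poly}(n)\cdot 2^{-\Omega(\sqrt n)}$ by the induction hypothesis --- each such operation, while it probes batches $0,\dots,j-1$, sees a non-overcrowded array, so the computation in the proof of Proposition~\ref{prop:prob} shows it lands in $B_j$ with probability at most $\pi_j$; feeding this into the mutual-exclusivity decomposition underlying Lemma~\ref{timeinvariant}, now restricted to the finite window $(s-n^{B},s]$, bounds $\Pr[\,B(q,s)=j \mid \mathcal E_{j-1}\,]\le c_j\pi_j$ for every process $q$. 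Inside $\mathcal E_{j-1}$ the subset hypothesis of the generalized Chernoff bound also holds for the indicators $\mathbf 1[B(q,s)=j]$ --- further conditioning on some processes lying in $B_j$ does not break $\mathcal E_{j-1}$, and within $\mathcal E_{j-1}$ the per-process bound $c_j\pi_j$ holds regardless of that conditioning, exactly as in the proof of Proposition~\ref{chernoffapp} --- so Lemma~\ref{lem:chernoff} yields $\Pr[\,B_j \text{ overcrowded at } s \mid \mathcal E_{j-1}\,]\le 2^{-\Omega(\sqrt n)}$. A union bound over the $W=\mathrm{poly}(n)$ times $s$ in the window together with $\Pr[\lnot\mathcal E_{j-1}]$ establishes $\mathcal E_j$; after $\log\log n-1$ steps, $\mathcal E_{\log\log n-1}$ asserts full balance throughout $[t_0+(\log\log n-1)n^{B},\,t_0+W]$, in particular at $t_0+W$.

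I expect the delicate point to be re-establishing the subset hypothesis of the generalized Chernoff bound in the infinite setting. Unconditionally, the rare event ``some lower batch was overcrowded when a relevant \Get{} operation probed it'' can become non-negligible once we condition on the even rarer event that many processes currently occupy $B_j$; the polynomial-length analysis sidesteps this by treating full balance as a standing assumption that is simply available, which it is not here. Conditioning instead on $\mathcal E_{j-1}$ --- an event whose complement is negligible \emph{only because} compactness confines the relevant \Get{} operations to a polynomial window --- restores precisely the situation of Section~\ref{sec:poly}, and this is where the compactness hypothesis is genuinely used. A secondary, routine obstacle is the simultaneous constraint on the constants $c_j$ (large enough for Proposition~\ref{prop:prob} and still leaving the concentration margin required by Lemma~\ref{lem:chernoff}); this is inherited verbatim from the polynomial-length analysis.
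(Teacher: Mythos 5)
Your proposal is correct and follows essentially the same route as the paper: a self-healing lemma proved by induction on the batch index over consecutive windows of length $n^B$ (using compactness to confine the \Get{} operations relevant to batch $j$ to the preceding window, where the inductive balance event lets Proposition~\ref{prop:prob}, the Lemma~\ref{timeinvariant} decomposition, and the generalized Chernoff bound apply), followed by the corollary that any fixed time is fully balanced with high probability and a union bound over the operation's probe times combined with Corollary~\ref{cor:poly}. The only differences are cosmetic --- your event $\mathcal{E}_j$ spans $[t_0+jn^B,\,t_0+W]$ rather than the paper's per-interval indicator $Y_j$, and you spell out the conditioning needed to reuse the Chernoff subset hypothesis rather than citing Proposition~\ref{chernoffapp} wholesale.
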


\paragraph{Proof Strategy.} We first prove that, from an arbitrary starting state, in particular from an unbalanced one,
the \lit{LevelArray} enters and remains in a \emph{fully balanced} state after at most polynomially many steps, with
high probability (see Lemma~\ref{lem:heal}). This implies that the array is fully balanced at \emph{any} given time with
high probability. 
The claim then follows by
Corollary~\ref{cor:poly}. 

\begin{lemma}
\label{lem:heal}
 Given a compact schedule with bound $B$ and a \lit{LevelArray} in arbitrary initial state, the \lit{LevelArray} will be fully balanced
after $n^{B}\log \log n$ total system steps with probability at least $1-O(n^B(\log\log n)^2/2^{\beta \sqrt{n}})$, where $\beta < 1$ is a constant. 
\end{lemma}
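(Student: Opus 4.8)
The plan is to construct an explicit ``healing schedule'' of $\log\log n$ phases, each of length exactly $n^B$, and to show that batch $j$ is repaired by the end of phase $j$. Write $\tau_i=i\cdot n^B$ and $L=\log\log n$, so $\tau_L=n^B\log\log n$. The engine is a single consequence of compactness, the \emph{recency claim}: for any time $t\ge n^B$, every slot occupied at time $t$ is held by a process whose most recent \Get{} was initiated \emph{strictly after} $t-n^B$; in particular no slot present in the arbitrary initial state survives past time $n^B$. This is immediate from the definition of compactness: if $q$ held a slot at $t$ whose \Get{} started at $t_0\le t-n^B$, compactness forces $q$ to execute the matching \Free{} before $t_0+n^B\le t$, contradicting that $q$ still holds the slot at $t$ (reading an initial holder as having ``acquired at time $\le 0$'' covers the initial state).

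For $1\le i\le L$ let $\mathcal H_i$ be the event ``batches $0,1,\dots,i-1$ are not overcrowded at every time $t\in[\tau_i,\tau_L]$.'' Then $\mathcal H_L$ is exactly the event that the array is fully balanced at time $\tau_L=n^B\log\log n$, which is what the lemma asserts. The base case $\mathcal H_1$ holds with probability $1$, since batch $B_0$ has $3n/2$ slots while at most $n$ slots are ever occupied in total, so $B_0$ is never overcrowded in the sense of Definition~\ref{def:balanced}. The inductive claim is
$\Pr[\neg\mathcal H_{i+1}\mid\mathcal H_i]\le n^B\log\log n/2^{\beta\sqrt n}$ for each $1\le i\le L-1$; chaining these (using $\Pr[\neg\mathcal H_1]=0$, so $\Pr[\neg\mathcal H_L]\le\sum_{i=1}^{L-1}\Pr[\neg\mathcal H_{i+1}\mid\mathcal H_i]$) yields $\Pr[\neg\mathcal H_L]\le (L-1)\,n^B\log\log n/2^{\beta\sqrt n}=O(n^B(\log\log n)^2/2^{\beta\sqrt n})$, as required.

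To prove the inductive claim, fix $i$ and condition on $\mathcal H_i$; then the array is balanced up to batch $i-1$ at every time in $[\tau_i,\tau_L]$. Fix any time $t\in[\tau_{i+1},\tau_L]$. Since $t-n^B\ge\tau_i$, the recency claim says every slot occupied at $t$ belongs to a \Get{} that started after $\tau_i$ and hence performed all of its random choices in $(\tau_i,t]\subseteq[\tau_i,\tau_L]$ — times at which, under $\mathcal H_i$, the array is balanced up to batch $i-1$. Proposition~\ref{prop:prob} (with $j=i-1$) then applies to each such \Get{}: it is regular up to batch $i$, so it holds a slot in batch $i$ with probability at most $\pi_i$. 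Re-running the counting of Lemma~\ref{timeinvariant} — whose only ingredients are the oblivious adversary, the recency supplied by compactness, and Proposition~\ref{prop:prob} for the batch in which the relevant \Get{} returns (here batch $i$) — upgrades this to $\Pr[B(q,t)=i\mid\mathcal H_i]\le c_i\pi_i$ for every process $q$. The subset-independence hypothesis of the generalized Chernoff bound holds for precisely the reason given in Proposition~\ref{chernoffapp}: further conditioning on other processes occupying batch $i$ leaves the array balanced up to batch $i-1$, so the per-process bound $c_i\pi_i$ is unaffected. Lemma~\ref{lem:chernoff} then bounds the probability that batch $i$ is overcrowded at time $t$ (given $\mathcal H_i$) by $2^{-\beta\sqrt n}$, exactly as in Proposition~\ref{chernoffapp}. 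A union bound over the at most $n^B\log\log n$ time steps $t\in[\tau_{i+1},\tau_L]$ shows that, given $\mathcal H_i$, batch $i$ stays un-overcrowded throughout $[\tau_{i+1},\tau_L]$ except with probability $n^B\log\log n/2^{\beta\sqrt n}$; combined with $\mathcal H_i$ (which already controls batches $0,\dots,i-1$ on $[\tau_i,\tau_L]\supseteq[\tau_{i+1},\tau_L]$) this gives $\mathcal H_{i+1}$.

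The main obstacle is the conditioning bookkeeping in the inductive step: $\mathcal H_i$ is an event about a time window that extends past $\tau_{i+1}$, so one must verify that conditioning on it does not invalidate Proposition~\ref{prop:prob} (it does not, since that bound is a product of per-batch failure probabilities that hold pointwise for every fixing of everything other than the \Get{}'s own coins, using only that batch $i-1$ has enough free slots and that the schedule is coin-oblivious) nor the subset-independence needed by Lemma~\ref{lem:chernoff} (handled exactly as in Proposition~\ref{chernoffapp}). A secondary point is aligning the phase length with compactness: phases must have length exactly $n^B$ so that at the \emph{end} of phase $i$ every live slot was born \emph{within} phase $i$, which is what lets the just-repaired batches $0,\dots,i-1$ govern the distribution of those slots and drives the doubly-exponential decay one batch at a time. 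Everything else is a reprise of the polynomial-length machinery of Lemmas~\ref{timeinvariant}–\ref{lem:chernoff} and Proposition~\ref{chernoffapp}, with ``fully balanced since time $0$'' replaced by ``balanced up to batch $i-1$ since time $\tau_i$''.
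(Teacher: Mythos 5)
Your proof is correct and follows essentially the same route as the paper's: an induction over batch indices with phases of length $n^B$, using compactness to argue that every slot alive at the end of a phase was acquired within that phase (so its decision points fall where the lower batches are already balanced), and then invoking the machinery of Proposition~\ref{prop:prob}, Lemma~\ref{timeinvariant}, and Proposition~\ref{chernoffapp} with a union bound over times. The only differences are bookkeeping: your event $\mathcal{H}_i$ asserts balance over the whole suffix $[\tau_i,\tau_L]$ and you union-bound only over the newly repaired batch, whereas the paper's $Y_j$ tracks balance per interval and union-bounds over all batches up to $j+1$; both yield the same $O(n^B(\log\log n)^2/2^{\beta\sqrt{n}})$ failure probability.
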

\begin{proof}
We proceed by induction on the batch index $j \geq 0$. Let $T_j$ be the interval $[j n^B, (j+1)n^B-1]$ of length
$n^B$ in the schedule. Let $Y_j$ be an indicator variable for the event that, for every $i \leq j$, the
\lit{LevelArray} is balanced up to $i$ throughout the interval $T_i$. Let $\beta < 1$ be the constant from Proposition~\ref{chernoffapp}. 
For convenience, we write $\mu = n^B\log\log n / 2^{\beta \sqrt{n}}$. 

We will show that at least one additional batch in the \lit{LevelArray} becomes balanced over each interval $T_j$. In
particular, we claim, by induction, that $\Pr(\neg Y_j) \leq j\mu$. 
In particular, the probability that the array fails to be balanced up to $j$ after interval $T_j$ is small. 
Our goal is to show that the \lit{LevelArray} is fully
balanced with probability $1-O( \mu\log\log n )$, which corresponds exactly to the inductive claim for $j=\log\log
n$.

For $j = 0$, the \lit{LevelArray} is always trivially balanced up to batch $0$ with probability 1. For the induction
step, we assume $\Pr(\neg Y_j) \leq j \mu$, and we show that the probability that the array is not balanced up to batch $j
+ 1$ over the interval $T_{j+1}$ is at most $(j+1)\mu$.
By the law of total probability,
$\Pr(\neg Y_{j+1}) \leq \Pr(\neg Y_j) + \Pr(\neg Y_{j+1} | Y_j). $

\noindent In particular, there are two reasons why $Y_{j+1}$ may fail to hold. Firstly, the \lit{LevelArray} may not be
balanced up to $j$ over the interval $T_j$, an event which is subsumed by $\neg Y_j$. Secondly, the \lit{LevelArray} may
become unbalanced in the interval $T_{j+1}$, despite having been balanced up to $j$ over $T_j$. By the inductive
hypothesis, $\Pr(\neg Y_j) \leq j\mu$. We will bound  $\Pr(\neg Y_{j+1} | Y_j)$ by applying the following claim to
intervals $T=T_j$ and $T'=T_{j+1}$:

\begin{claim}
Suppose the \LA{} is balanced up to batch $j \leq \log\log n - 2$ throughout some interval $T$ of length $n^B$. Let
$T'$ be the interval of length $n^B$ that follows $T$. Then the probability that \LA{} fails to be balanced up to
batch $j+1$ throughout $T'$ is at most $\mu$.
\end{claim}
\begin{proof}
Let $O$ be the set of \Get{} operations whose returned names are still held at the start of interval $T'$ (equivalently at the end of
$T$). Since the schedule is compact, each \Get{} in $O$ must have been initiated during the last $n^B$ steps, i.e., within $T$. 
This holds because the parent process of any \lit{Get} initiated before $T$ would have been required to call 
\lit{Free} \emph{before} the end of $T$. Therefore, all decision points of every \Get{} in $O$ must have occurred in $T$. 
Thus, by the initial assumption, the precondition of Proposition~\ref{chernoffapp} is satisfied, and we have, for each $i \leq j+1, t \in
T'$, the probability that batch $i$ is overcrowded is at most $(1/2)^{\beta \sqrt{n}}$. 

By the union bound, the probability that any batch $i \leq j+1$ becomes overcrowded at any time $t \in T'$ is thus at
most $(j+1)|T'|/2^{\beta \sqrt{n}} \leq \mu$, proving the claim. 
\end{proof}

Returning to the proof of Lemma~\ref{lem:heal}, we have
$ \Pr(\neg Y_j) + \Pr(\neg Y_{j+1} | Y_j) \leq j\mu + \mu = (j+1)\mu,$
\noindent as desired.
\end{proof}
\noindent Lemma~\ref{lem:heal} naturally leads to the following corollary, whose proof can be found in the Appendix. 

% \begin{corollary}[Bad State Probability]
% \label{cor:badstate}
%  For any time $t \geq 0$ in the schedule, the probability that the array is not fully balanced at time $t$ is at most
% $\mu$.
% \end{corollary}
% \begin{proof}
%  Pick an arbitrary time $t \geq 0$ in the schedule. If $t \leq n^{B} \log \log n$, then the claim follows from
% Proposition~\ref{prop:overcrowded}. Otherwise, fix $S_{\id{init}}$ to be the initial state at time $t - n^{B} \log
% \log n$, and apply Lemma~\ref{lem:heal} at this state, to obtain that the array is fully balanced with probability at
% least $1 - \mu$ at $t$, as desired. 
% \end{proof}

%\corollaryproofpostponed{corbadstate}
\begin{corollary}
\label{corbadstate}
 For any time $t \geq 0$ in the schedule, the probability that the array is not fully balanced at time $t$ is at most
$\mu$.
\end{corollary}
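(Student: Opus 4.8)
The plan is to reduce the claim to Lemma~\ref{lem:heal} by a ``reset'' argument: no matter what the state of the array is at some earlier point, Lemma~\ref{lem:heal} guarantees that after $n^B \log\log n$ further steps the array is fully balanced with probability at least $1 - O(\mu\log\log n)$. Since the bad event at time $t$ forces *some* recent window of the schedule to have failed to heal, we can charge the probability.

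First I would handle the trivial case $t < n^B\log\log n$: here we simply invoke Lemma~\ref{lem:heal} directly, applied from the initial state of the execution (time $0$), which tells us the array is fully balanced after $n^B\log\log n$ steps except with probability $O(\mu\log\log n)$. Actually a slightly stronger reading is needed: the proof of Lemma~\ref{lem:heal} shows not only that the array is fully balanced at the single time $n^B\log\log n$, but that it is balanced up to batch $j$ throughout the interval $T_j$ for each $j$, and in particular balanced up to $\log\log n - 1$ throughout $T_{\log\log n - 1}$, which is an interval of length $n^B$ ending at time $n^B\log\log n$. So for all $t$ in that final interval the array is fully balanced w.h.p.

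For general $t \geq n^B\log\log n$, the key observation is that Lemma~\ref{lem:heal} is stated for an \emph{arbitrary} initial state, so we may apply it to the ``sub-execution'' that starts at time $t_0 := t - n^B\log\log n$, treating the (adversarially determined but coin-flip-independent) configuration of the array at time $t_0$ as the initial state. Here we must be a little careful: Lemma~\ref{lem:heal}'s proof only uses the compactness of the schedule and the obliviousness of the adversary, both of which are properties of the tail of the schedule starting at $t_0$, so nothing about the prefix before $t_0$ is needed — the argument goes through verbatim with $t_0$ in place of $0$. Applying it, the array is balanced up to $\log\log n - 1$ throughout the interval $[t_0 + (\log\log n - 1)n^B,\ t_0 + n^B\log\log n]$, except with probability $O(\mu\log\log n) = O(\mu)$ (absorbing the $\log\log n$ into the constant as the paper does). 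Since $t = t_0 + n^B\log\log n$ lies in this interval, the array is fully balanced at time $t$ except with probability at most $\mu$, as claimed.

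The main obstacle is the point raised in the second paragraph: making precise that Lemma~\ref{lem:heal} can be ``re-centered'' at an arbitrary time $t_0$ with the array configuration there as the new initial state. One has to check that the adversary's schedule and the processes' coin flips still satisfy the hypotheses relative to this shifted origin — in particular that compactness of $\sigma$ implies compactness of its suffix (immediate from the definition, since the bound $B$ is uniform over all times), and that the inductive concentration argument in Proposition~\ref{chernoffapp} only ever references operations whose decision points lie inside the relevant intervals, all of which fall after $t_0$. Once this ``memorylessness'' of the healing argument is established, the corollary is a one-line consequence, and the stated bound $\mu$ (rather than $\mu\log\log n$) is obtained by folding the $\log\log n$ factor into the asymptotic constant exactly as in the statement of Lemma~\ref{lem:heal}.
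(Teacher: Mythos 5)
Your main case ($t \ge n^B\log\log n$) is exactly the paper's argument: re-center Lemma~\ref{lem:heal} at $t_0 = t - n^B\log\log n$, treat the configuration there as the arbitrary initial state, and conclude that the array has healed by time $t$. Your extra care about why re-centering is legitimate (compactness of the suffix, obliviousness, the concentration argument only referencing decision points after $t_0$) is a welcome elaboration of a step the paper leaves implicit.

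There is, however, a genuine gap in your handling of $t < n^B\log\log n$. Applying Lemma~\ref{lem:heal} from time $0$ with an \emph{arbitrary} initial state only yields, for $t$ in the interval $T_j$, that the array is balanced up to batch $j$ --- not fully balanced. Your ``stronger reading'' therefore covers only the final interval $T_{\log\log n - 1}$; for any $t < (\log\log n - 1)n^B$ your argument establishes nothing about batches above $j$, so the case you labeled trivial is in fact not closed by the healing lemma. The paper instead invokes Proposition~\ref{prop:overcrowded} for these early times, which exploits the fact that the array starts \emph{empty} (hence fully balanced) and shows it remains fully balanced throughout $[0,t]$ with the required probability. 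That is the missing ingredient: for small $t$ you must use the known-good initial state, not the arbitrary-state healing result. (Separately, your remark about ``absorbing the $\log\log n$ into the constant'' is not literally valid since $\log\log n$ is not a constant, but the paper is equally loose here and the discrepancy is immaterial because every bound in play is exponentially small in $\sqrt{n}$.)
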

\begin{proof}
 Pick an arbitrary time $t \geq 0$ in the schedule. If $t \leq n^{B} \log \log n$, then the claim follows from
Proposition~\ref{prop:overcrowded}. Otherwise, fix $S_{\id{init}}$ to be the initial state at time $t - n^{B} \log
\log n$, and apply Lemma~\ref{lem:heal} at this state, to obtain that the array is fully balanced with probability at
least $1 - \mu$ at $t$, as desired. 
\end{proof}

To complete the proof of Theorem~\ref{thm:infty}, fix an arbitrary compact schedule $\sigma$, and an arbitrary \Get{}
operation $\id{op}$ in the schedule. We upper bound the probability that $\id{op}$ takes $\omega( \log \log n )$ steps.
We know that, in the worst case, the operation performs $O( n )$ total steps (including steps in the backup). Let
$t_0, t_1, \ldots, t_k$ be the times in the execution when the operation performs random probes. By the structure of
the algorithm, $k = O( \log n )$. Our goal is to prove that $k = O ( \log \log n )$, with high probability. 

First, by Corollary~\ref{corbadstate} and the union bound, the probability that the array is not fully balanced at any
one of the times $\{t_i\}_{i = 1\ldots k}$ is at most $k \mu$. 
Assuming that the array is fully balanced at all times $t_i$, the probability that the process takes $\omega( \log \log
n )$ steps is at most $1 / n^{c}$, for $c \geq 1$, by Corollary~\ref{cor:poly}. Recall that $\mu$ is exponentially small in $n$. Then by the law of total
probability, the probability that the operation takes $\omega( \log \log n )$ steps is at most $k \mu + 1 /
n^{c} = O( 1 / n^\gamma )$, for $\gamma = c \geq 1$. Inversely, an arbitrary operation takes $O( \log \log n )$ steps in a compact schedule with high
probability, as claimed. The expectation bound follows similarly.

%!TEX root = activityarray-full.tex
\vspace{-1em}
\section{Implementation Results}
\label{sec:results}

\begin{figure*}[t]
\begin{center}
\begin{subfigure}[b]{0.45\textwidth}
 \centering 
 \includegraphics[width=2.7in, height=1.5in]{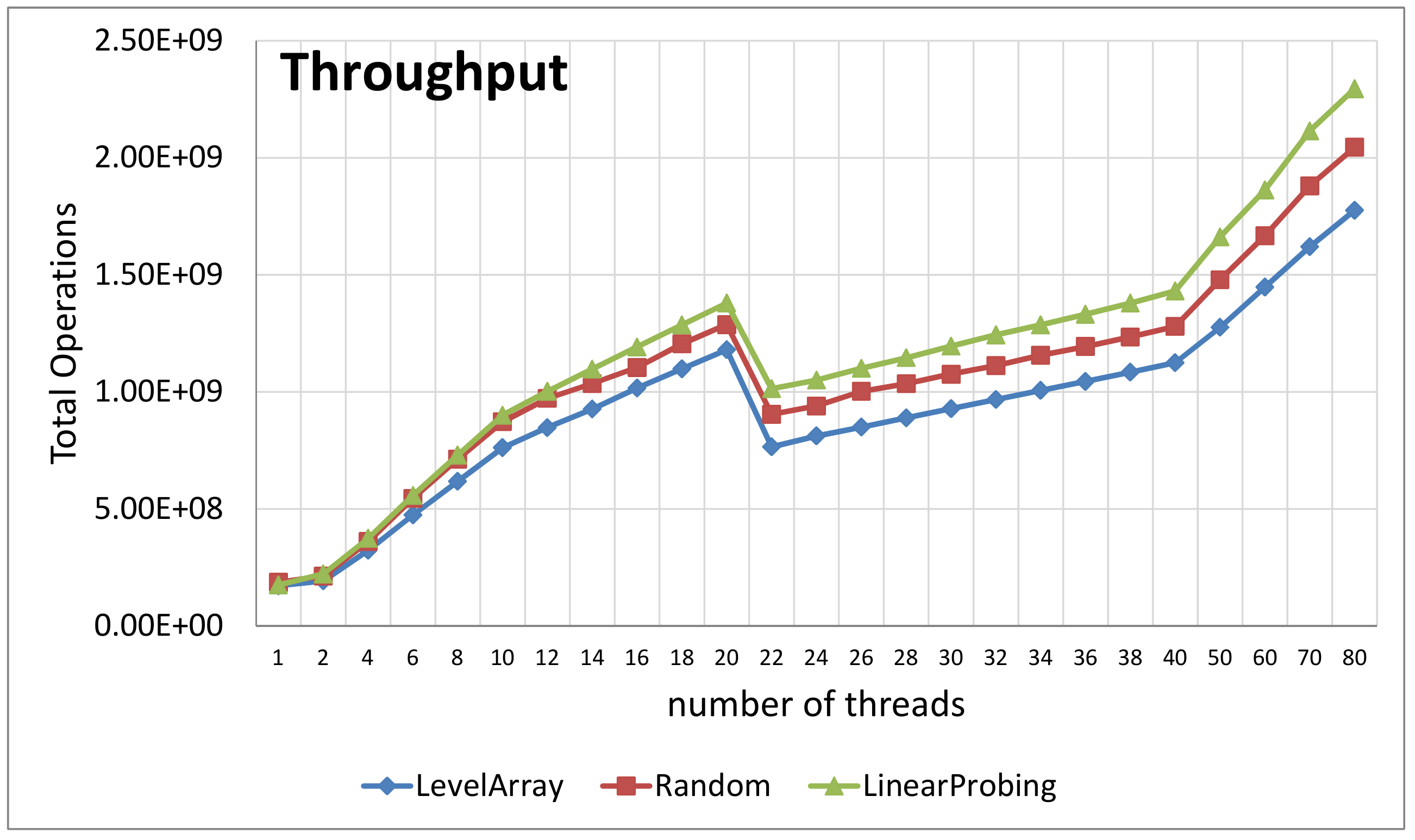}
 \label{fig:throughput}
\end{subfigure}
\quad
\begin{subfigure}[b]{0.45\textwidth}
 \centering 
 \includegraphics[width=2.7in, height=1.5in]{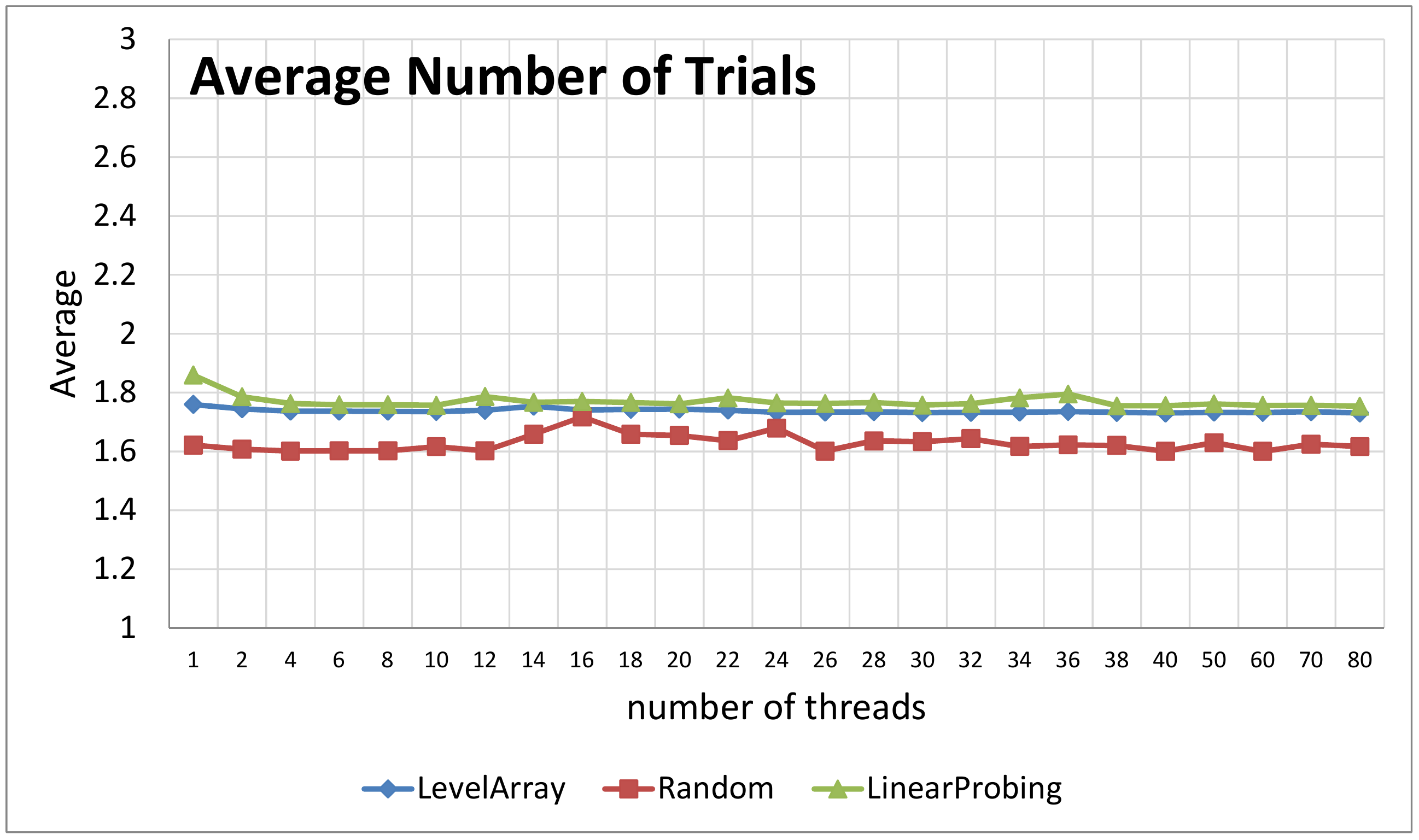}
 \label{fig:average}
\end{subfigure}

\begin{subfigure}[b]{0.45\textwidth}
 \centering 
 \includegraphics[width=2.7in, height=1.5in]{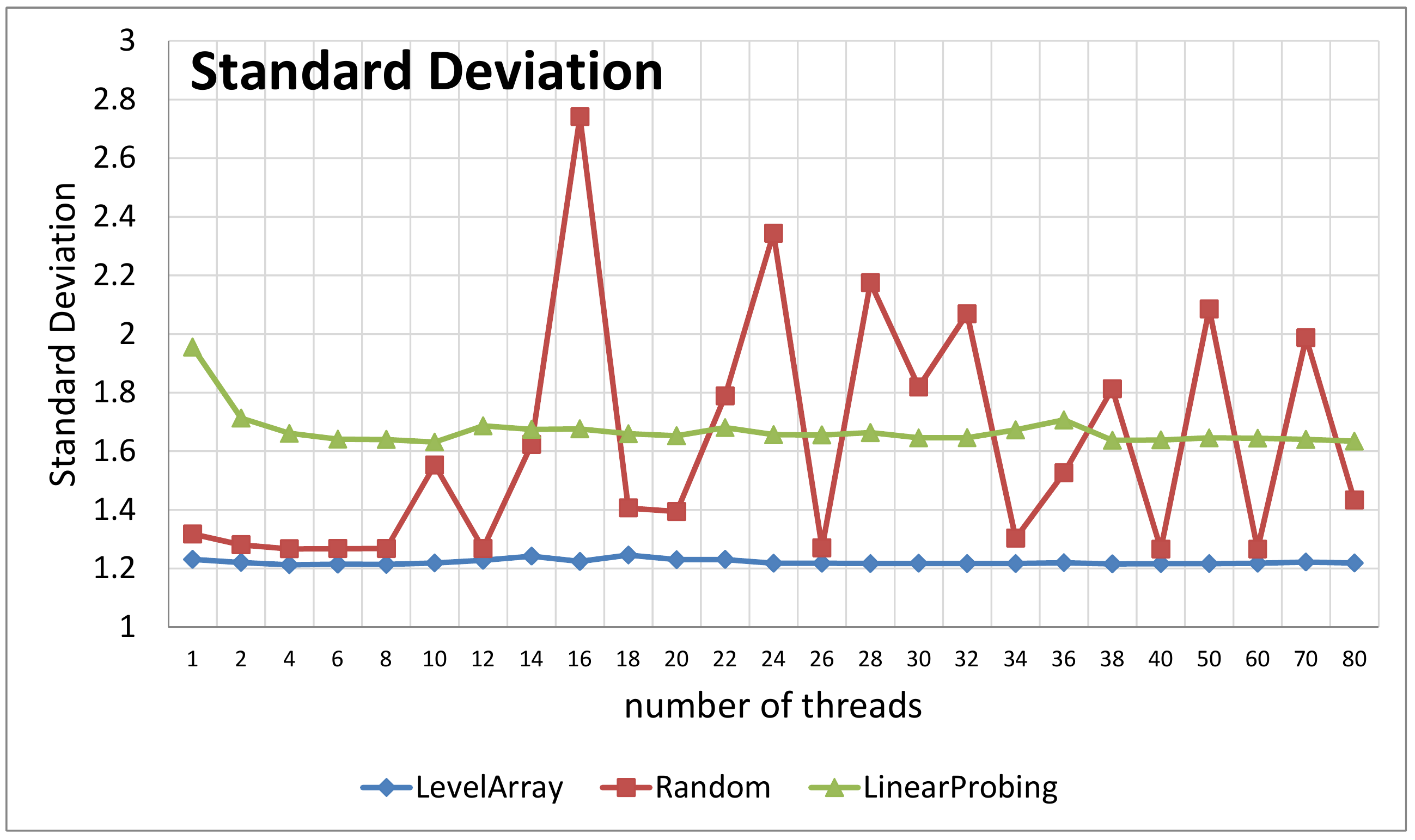}
 \label{fig:STD}
\end{subfigure}
\quad
\begin{subfigure}[b]{0.45\textwidth}
 \centering 
 \includegraphics[width=2.7in, height=1.5in]{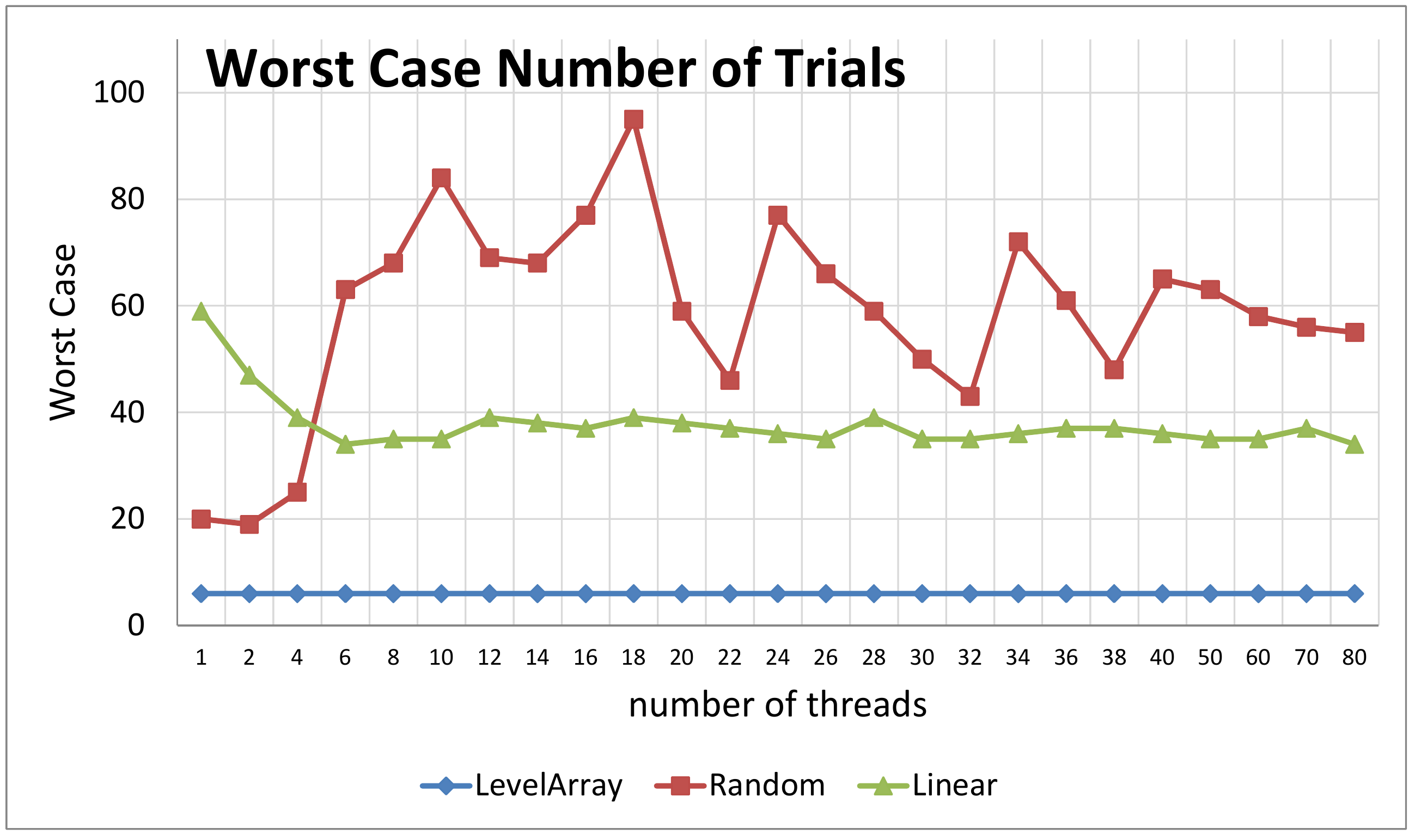}
 \label{fig:worst}
\end{subfigure}
\end{center}
\caption{Comparing the performance of \lit{LevelArray} with \lit{Random} and \lit{LinearProbing}. Throughput and average
complexity are similar, while the \lit{LevelArray} is significantly more stable in terms of standard deviation and
worst-case complexity.}
\label{fig:hash}
\end{figure*}

% \begin{figure}[t]
% \begin{center}
% \begin{subfigure}[b]{0.45\textwidth}
%  \centering 
%  \includegraphics[width=2.8in, height=1.6in]{./Hash_50_Throughput}
%  \label{fig:throughput}
% \end{subfigure}
% \quad
% \begin{subfigure}[b]{0.45\textwidth}
%  \centering 
%  \includegraphics[width=2.8in, height=1.6in]{./Hash_50_Throughput}
%  \label{fig:throughput}
% \end{subfigure}
% 
% 
% 
% \includegraphics[width=2.8in, height=1.6in]{./Hash_50_STD}
% \includegraphics[width=2.8in, height=1.6in]{./Hash_50_Worst}
% \end{center}
% \caption{}
% \label{fig:hash}
% \end{figure} 
% \begin{figure}[ht]
% \begin{center}
% \includegraphics[width=3.5in, height=2.0in]{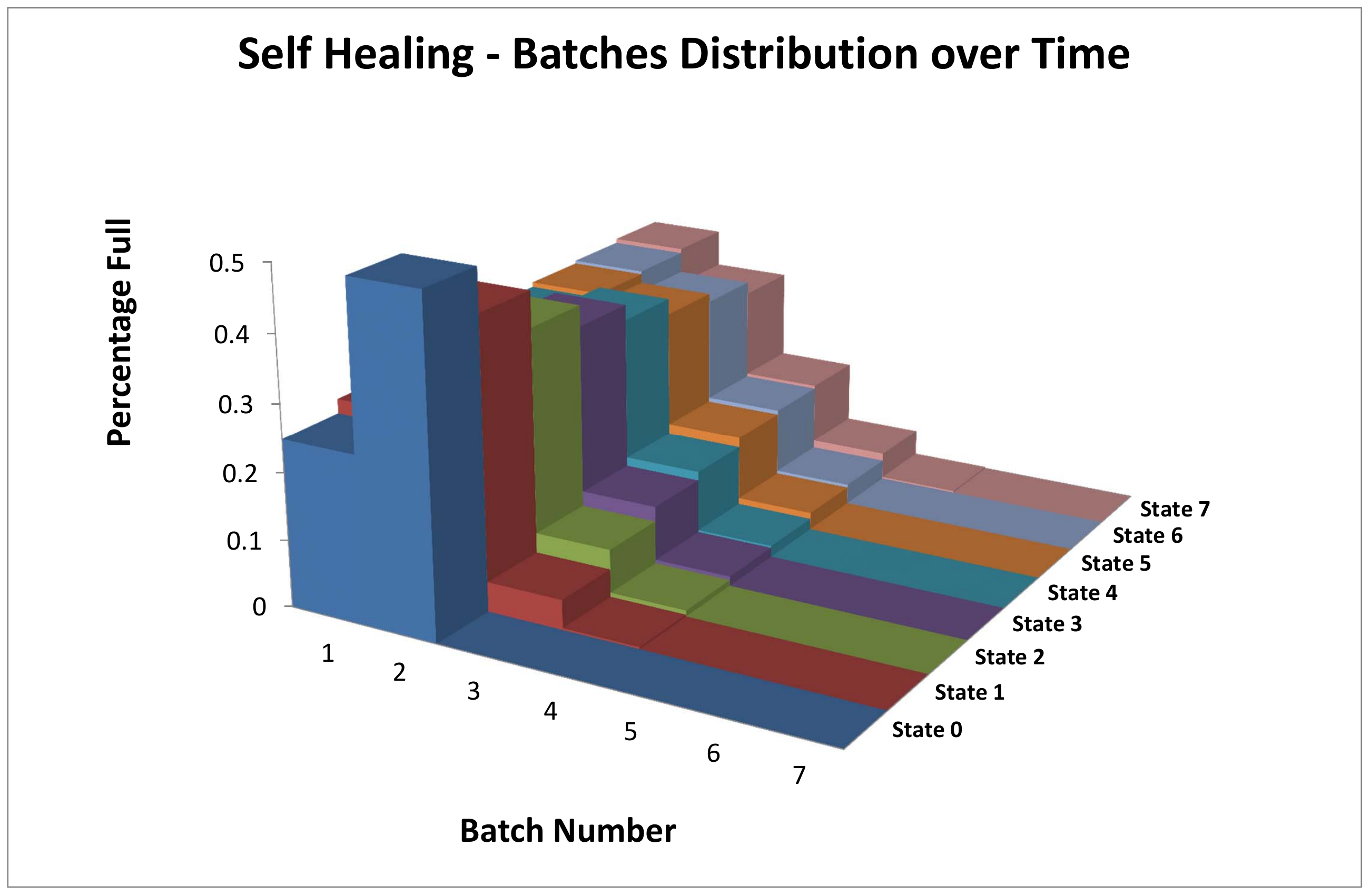}
% \end{center}
% \caption{}
% \label{fig:histogram}
% \end{figure} 

 \paragraph{Methodology.} The machine we use for testing is a Fujitsu PRIMERGY RX600 S6 server with four Intel Xeon
E7-4870 (Westmere EX) processors. Each processor has 10 2.40 GHz cores, each of which multiplexes two hardware threads,
so in total our system supports $80$ hardware threads. Each core has private write-back L1 and L2 caches; an inclusive
L3 cache is shared by all cores.

We examine specific behaviors by adjusting the following benchmark parameters. The parameter $n$ is the number of hardware threads spawned, while
$N$ is the maximum number of array locations that may be registered at the same time. For $N > n$, we
emulate concurrency by requiring each thread to register $N / n$ times before deregistering. The parameter $L$ is the
number of slots in the array. In our tests, we consider values of $L$ between $2N$ and $4N$. 

The benchmark first allocates a global array of length $L$, split as described in Section~\ref{sec:algorithm}, and then
spawns $n$ threads which repeatedly register and deregister from the array. In the implementation, threads perform
exactly \emph{one} trial in each batch, i.e. $c_\ell = 1$, for all batches $\ell = 1, \ldots, \log N$. (We tested the
algorithm with values $c_\ell > 1$ and found the general behavior to be similar; its performance is slightly lower given
the extra calls in each batch. The relatively high values of $c_\ell$ in the analysis are justified since our objective was to obtain high concentration bounds.) 
Threads use \lit{compare-and-swap} to acquire a location. We used the Marsaglia and
Park-Miller (Lehmer) random number
generators, alternatively, and found no  difference between the results.

The \emph{pre-fill percentage} defines the percentage of array slots that are occupied during the execution we examine.
For example, $90$\% pre-fill percentage causes every thread to perform $90$\% of its registers \emph{before} executing
the main loop, without deregistering. Then, every thread's main-loop performs the remaining $10$\% of the register and
deregister operations, which execute on an array that is  $90$\% loaded at every point. 

In general, we considered regular-use parameter values; we considered somewhat exaggerated contention levels (e.g. $90$\% pre-fill percentage) 
since we are interested in the worst-case behavior of the algorithm. 

\paragraph{Algorithms.} We compared the performance of \lit{LevelArray} to three other common algorithms used for fast registration. 
The first alternative, called \lit{Random}, performs trials at random in an array of the same size as our algorithm, until
successful. The second, called \lit{LinearProbing}, picks a random location in an array and probes locations linearly to
the right from that location, until successful. We also tested the \emph{deterministic} implementation that starts at
the first index in the array and probes linearly to the right. Its average performance is at least two orders of
magnitude worse than all other implementations for all measures considered, therefore it is not shown on the graphs.

\paragraph{Performance.} 
Our first set of tests is designed to determine the throughput of the algorithm, i.e. the total number of \lit{Get} and
\lit{Free} operations that can be performed during a fixed time interval. We analyzed the throughput for values of $n$
between $1$ and $80$, requiring the threads to register a total number of $N$ emulated threads on an array of size $L$ 
between $2N$ and $4N$. We also considered the way in which the throughput is affected by the different pre-fill
percentages. 

Figure~\ref{fig:hash} presents the results for $n$ between $1$ and $80$, $N = 1000 n$ simulated operations, $L = 2N$,
and a pre-fill percentage of $50$\%. We ran the experiment for $10$ seconds, during which time the algorithm performed
between $200$ million and $2$ billion operations. The first graph gives the total number of successful operations as a
function of the number of threads. 
As expected, this number grows linearly with the number of threads. (The variation at $20$ is because this is the
point where a new processor is used---we start to pay for the expensive inter-processor communication. Also, notice that the X
axis is not linear.) The fact that the throughput of \lit{LevelArray} is lower than that of \lit{Random} and
\lit{LinearProbing} is to be expected, since the average number of trials for a thread probing randomly is lower than
for our algorithm (since \lit{Random} and \lit{LinearProbing} use more space for the first trial, they are more likely
to succeed in one operation; \lit{LinearProbing} also takes advantage of better cache performance.)
This fact is illustrated in the second graph, which plots the \emph{average} number of trials per operation. For all 
algorithms, the average number of trials per \lit{Get} operation is between $1.5$ and $1.9$. 

The lower two graphs illustrate the main weakness of the simple randomized approaches, and the key property
of \lit{LevelArray}. In \lit{Random} and \lit{LinearProbing}, even though processes perform very few trials on average,
there are always some processes that have to perform a large number of probes before getting a location. Consequently,
the standard deviation is high, as is the worst-case number of steps that an operation may have to take. (To decrease the impact of outlier executions, 
the worst-case shown is averaged over all processes, and over several repetitions.) On the other
hand, the \lit{LevelArray} algorithm has predictable low cost even in extremely long executions. In this case, the
maximum number of steps an operation must take before registering is at most $6$, taken over $200$ million to $2$
billion 
operations. The results are similar for pre-fill percentages between $0$\% and $90$\%, and for different array sizes.
These bounds are also maintained in executions with more than $10$ billion operations.

\begin{figure}[t]
\begin{center}
\includegraphics[width=3.25in, height=1.85in]{}
\end{center}
\caption{The healing property of the algorithm. The array starts in unbalanced state (batch two is overcrowded),
and smoothly transitions towards a balanced state as more operations execute. Snapshots are taken every $4000$
operations.}
\label{fig:histogram}
\end{figure} 
\paragraph{The Healing Property.} The stable worst-case performance of \lit{LevelArray} is given by the properties
of the distribution of probes over batches. However, over long executions, this distribution might get 
skewed, affecting the performance of the data structure. The analysis in Section~\ref{sec:infty-analysis}
suggests that the batch distribution returns to normal after polynomially many operations. We test this
argument in the next experiment, whose results are given in Figure~\ref{fig:histogram}. 

The figure depicts the distribution of threads in batches at different points in the execution. Initially, the first
batch is a quarter full, while the second batch is half full, therefore overcrowded. As
we schedule operations, we see that the distribution returns to normal. After approximately $32000$ arbitrarily chosen
operations are scheduled, the distribution is in a stable state. (Snapshots are taken every $4000$ operations.)
The speed of convergence is higher than predicted by the analysis. We obtained the same results for variations of the
parameters. 

\vspace{-1em}
\section{Conclusions and Future Work}
\label{sec:conclusion}

In general, an obstacle to the adoption of randomized algorithms in a concurrent
setting is the fact that, while their performance may be good on
average, it can have high variance for individual threads in long-lived executions. Thus,
randomized algorithms are seen as unpredictable. In this paper, we
exhibit a randomized algorithm which combines the best of
both worlds, guaranteeing good performance on average and in the worst
case, over long finite or even infinite executions (under reasonable schedule assumptions). 
One direction of future work would be investigating randomized solutions with the same
strong guarantees for other practical concurrent problems, such as elimination~\cite{ShavitT97} or
rendez-vous~\cite{AHM11}.

\bibliographystyle{plain}
\bibliography{bibliography}

\end{document}